\title{The Complexity of \texorpdfstring{$\mathbf{(P_3, H)}$-Arrowing}{(P3, H)-Arrowing} and Beyond} 
\author{Zohair Raza {Hassan}}{Rochester Institute of Technology, Rochester, NY, 14623, USA}{zh5337@rit.edu}{
https://orcid.org/0000-0001-5590-5235}{}
\authorrunning{Z.\,R. Hassan} 
\keywords{Graph arrowing, Ramsey theory, Complexity.} 
\newcommand{\ra}{\rightarrow}
\newcommand{\np}{{\rm NP}}
\newcommand{\conp}{{\rm coNP}}
\newcommand{\epl}{\mathrm{epl}}
\newcommand{\mepl}{\mathrm{mepl}}
\newtheorem{problem}{Problem}
\begin{document}

\maketitle

\begin{abstract}
Often regarded as the study of how order emerges from randomness, Ramsey theory has played an important role in mathematics and computer science, giving rise to applications in numerous domains such as logic, parallel processing, and number theory. The core of graph Ramsey theory is arrowing: For fixed graphs $F$ and $H$, the $(F, H)$-Arrowing problem asks whether a given graph, $G$, has a red/blue coloring of the edges of $G$ such that there are no red copies of $F$ and no blue copies of $H$. For some cases, the problem has been shown to be coNP-complete, or solvable in polynomial time. However, a more systematic approach is needed to categorize the complexity of all cases.

We focus on $(P_3, H)$-Arrowing as $F = P_3$ is the simplest meaningful case for which the complexity question remains open, and the hardness for this case likely extends to general $(F, H)$-Arrowing for nontrivial $F$. In this pursuit, we also gain insight into the complexity of a class of matching removal problems, since $(P_3, H)$-Arrowing is equivalent to $H$-free Matching Removal. We show that $(P_3, H)$-Arrowing is coNP-complete for all $2$-connected $H$ except when $H = K_3$, in which case the problem is in P.
We introduce a new graph invariant to help us carefully combine graphs when constructing the gadgets for our reductions.
Moreover, we show how $(P_3,H)$-Arrowing hardness results can be extended to other $(F,H)$-Arrowing problems. This allows for more intuitive and palatable hardness proofs instead of ad-hoc constructions of SAT gadgets, bringing us closer to categorizing the complexity of all $(F, H)$-Arrowing problems.
\end{abstract}

\section{Introduction and related work}
\label{sec:intro}

At what point, if ever, does a system get large enough so that certain patterns become unavoidable? 
This question lies at the heart of Ramsey theory, which,
since its inception in the 1930s,
aims to find these thresholds for various combinatorial objects. Ramsey theory has played an important role in mathematics and computer science, finding applications in fields such as 
cryptography, algorithms, game theory, and more~\cite{rosta}.
A key operator within Ramsey theory is the arrowing operator, which is defined for graphs like so: 
given graphs $F, G$, and $H$, we say that $G \ra (F, H)$ (read, $G$ \textit{arrows} $F, H$) if every red/blue coloring of $G$'s edges contains a red $F$ or a blue $H$.
In this work, we analyze the complexity of computing this operator when $F$ and $H$ are fixed graphs. The general problem is defined as follows.

\begin{problem}
[$(F, H)$-Arrowing]
For fixed $F$ and $H$,
given a graph $G$, does $G \ra (F, H)$?
\end{problem}

The problem is clearly in coNP; a red/blue coloring of $G$'s edges with no red $F$ and no blue $H$ forms a certificate that can be verified in polynomial time since $F$ and $H$ are fixed graphs.
Such a coloring is referred to as an $(F, H)$-good coloring.
The computational complexity of $(F, H)$-Arrowing has been categorized for several---but not all---pairs $(F, H)$.
For instance, $(P_2, H)$-Arrowing is in P for all $H$, where $P_2$ is the path graph on $2$ vertices. This is because any coloring of an input graph $G$ that does not contain a red $P_2$ must be entirely blue, and thereby $(P_2, H)$-Arrowing is equivalent to determining whether $G$ is $H$-free.
Burr showed that $(F, H)$-Arrowing is coNP-complete when $F$ and $H$ are 3-connected graphs---these are graphs which remain connected after the removal of any two vertices, e.g., $(K_5, K_6)$-Arrowing~\cite{Bu3}. More results of this type are discussed in Section~\ref{sec:rw}.

In this work, we explore the simplest nontrival case for $F$, $F = P_3$, and provide a complete classification of the complexity when $H$ is a $2$-connected graph---a graph that remains connected after the removal of any one vertex. In particular, we prove:
\begin{theorem}
\label{thm:main}
    $(P_3, H)$-Arrowing is 
    coNP-complete for all $2$-connected $H$ except when $H = K_3$, in which case the problem is in P.
\end{theorem}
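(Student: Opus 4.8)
The plan is to attack the complementary problem throughout. A coloring of $G$ avoids a red $P_3$ precisely when its red edges form a matching, so a $(P_3,H)$-good coloring is exactly a matching $M \subseteq E(G)$ (the red edges) whose deletion leaves $G - M$ free of $H$. Hence $G \not\ra (P_3,H)$ if and only if some matching can be removed from $G$ to make it $H$-free, which is the $H$-free Matching Removal problem. Membership in $\conp$ is immediate, since such a matching is a polynomially verifiable certificate that $G \not\ra (P_3,H)$. The theorem therefore splits into two tasks: a polynomial-time algorithm when $H = K_3$, and an $\np$-hardness proof for $H$-free Matching Removal whenever $H$ is $2$-connected and $H \neq K_3$.

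For $H = K_3$ I would aim for a structural characterization of the graphs that admit a triangle-destroying matching. The crucial simplification is that, since the red edges form a matching, every triangle must contain \emph{exactly} one red edge; so the task is to pick one edge from each triangle with the picked edges pairwise vertex-disjoint. I would exploit the smallness and density of $K_3$ --- triangles sharing an edge are killed simultaneously, and each edge sits in a tightly constrained local neighbourhood of triangles --- to recast this choice as a polynomially solvable problem (a matching or $2$-SAT instance), or to exhibit a checkable family of obstructions. This is exactly where $K_3$ is special: it is too small to host the forcing gadgets used below, and that is what separates it from every larger $2$-connected $H$.

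For the hard cases the plan is to reduce an $\np$-complete satisfiability-style problem to $H$-free Matching Removal. The heart of the construction is a \emph{determiner}, a gadget assembled from copies of $H$ with a distinguished edge such that, in every matching removal that makes the gadget $H$-free, that edge is \emph{forced} into a prescribed colour, while the opposite colour remains realizable when the edge is considered in isolation. Such forced edges act as signals: wiring determiners together lets a truth value propagate, and one then builds variable gadgets and clause gadgets so that $G$ admits a globally $H$-destroying matching exactly when the formula is satisfiable. Here the $2$-connectivity of $H$ does the essential work: any copy of $H$ must live inside a single block of the host graph, so gadgets glued along cut vertices cannot be straddled by a spurious copy of $H$, which keeps the forcing behaviour intact. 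The restriction $H \neq K_3$ guarantees $|V(H)| \geq 4$, leaving enough room inside each copy of $H$ to attach the machinery without collapsing it.

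The step I expect to be the main obstacle is \emph{uniformity}: the gadgets must be defined for \emph{every} $2$-connected $H \neq K_3$ at once, relying only on $2$-connectivity rather than on any individual feature of $H$. The delicate point is combining copies of $H$ into a determiner while maintaining exact control over which copies of $H$ the result contains --- an extra, unintended copy spanning two combined pieces would destroy the forcing guarantee. To manage this I would introduce a graph invariant that quantifies, for a prospective way of joining two graphs, how copies of $H$ can arise across the join, and then only combine pieces along joins whose invariant is safe. Proving that the assembled determiners contain precisely the intended copies of $H$, and hence exhibit precisely the intended matching-removal behaviour, is the technically demanding core of the argument.
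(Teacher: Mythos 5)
Your proposal follows essentially the same route as the paper: it recasts the complement as $H$-free Matching Removal, solves the $H=K_3$ case by observing that every triangle needs exactly one red edge and reducing to a (weighted) matching computation, and for $H \neq K_3$ reduces a SAT variant via forcing gadgets built from copies of $H$, with $2$-connectivity confining copies of $H$ to single blocks and a join-controlling invariant (the paper's ``edge pair linkage'') preventing rogue copies. The one caveat is that your plan stops exactly where the paper's technical work begins --- the invariant, the combination lemma, and the uniform enforcer/extender/variable/clause gadgets (including the exceptional cases $J_4$ and $C_4$) are correctly identified as the crux but not actually constructed.
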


We do this by reducing an NP-complete SAT variant to $(P_3, H)$-Arrowing's complement,
$(P_3, H)$-Nonarrowing 
(for fixed $H$, does there exist a $(P_3, H)$-good coloring of a given graph?),
and showing how to construct gadgets for any $2$-connected $H$. It is important to note that combining different copies of $H$ can be troublesome; it is possible to combine graphs in a way so that we end up with more copies of it than before, e.g., combining two $P_n$'s by their endpoints makes several new $P_n$'s across the vertices of both paths.
Results such as Burr's which assume $3$-connectivity avoid such problems, in that we can
combine several copies of $3$-connected graphs without worrying about forming new ones. 
If $G$ is $3$-connected with vertices $u,v \in V(G)$ and we construct a graph $F$ by taking two copies of $G$ and identifying $u$ across both copies, then identifying $v$ across both copies, no new copies of $G$ are constructed in this process; if a new $G$ is created then it must be disconnected by the removal of the two identified vertices, contradicting $F$'s $3$-connectivity.
This makes it easier to construct gadgets for reductions.
To work with $2$-connected graphs and show how to combine them carefully, we present a new measure of intra-graph connectivity called \textbf{edge pair linkage}, and use it to prove sufficient conditions under which two copies of a 
$2$-connected graph $G$ can be combined without forming new copies of $G$. 

By targeting the $(P_3, H)$ case we gain new insight and tools for the hardness of $(F, H)$-Arrowing in the general case since $F = P_3$ is the simplest case for $F$. 
We conjecture that if $(P_3, H)$-Arrowing is hard, then $(F,H)$-Arrowing is also hard for all nontrivial $F$, but this does not at all follow immediately.
Towards the goal of categorizing the complexity of all $(F,H)$-Arrowing problems, we show how to extend the hardness results of $(P_3, H)$-Arrowing to
other $(F, H)$-Arrowing problems in Section~\ref{sec:extend}. 
These extensions are more intuitive and the resulting reductions are more palatable
compared to constructing SAT gadgets. 
We believe that techniques similar to the ones shown in this paper can be used to eventually categorize the complexity of 
$(F,H)$-Arrowing for all $(F,H)$ pairs.

The rest of the paper is organized as follows. Related work is discussed in Section~\ref{sec:rw}. 
We present preliminaries in Section~\ref{sec:prelim}, wherein we also define and analyze edge pair linkage.
Our complexity results for $(P_3, H)$-Arrowing are proven in Section~\ref{sec:p3complex}. We show how our hardness results extend to other arrowing problems in Section~\ref{sec:extend}, and we conclude in Section~\ref{sec:conclude}.
All proofs omitted in the main text are provided in the appendix.

\section{Related work}
\label{sec:rw}

\noindent \textbf{Complexity of $(F,H)$-Arrowing.} 
Burr showed that $(F,H)$-Arrowing is in P when $F$ and $H$ are both star graphs, or when $F$ is a matching~\cite{Bu3}. Hassan et al. showed that
$(P_3, P_3)$- and $(P_3, P_4)$-Arrowing are also in P~\cite{hassan2023}.
For hardness results, 
Burr showed that $(F, H)$-Arrowing is coNP-complete when $F$ and $H$ are members of $\Gamma_3$, the family of all $3$-connected graphs and $K_3$. 
The generalized $(F,H)$-Arrowing problem, where $F$ and $H$ are also part of the input, was shown to be $\Pi^p_2$-complete by Schaefer, who focused on constructions where $F$ is a tree and $H$ is a complete graph~\cite{Scha}.\footnote{$\Pi_2^p = \conp^{\np}$,
the class of all problems whose complements are solvable
by a nondeterministic polynomial-time Turing machine having
access to an NP oracle.} Hassan et al. recently showed that $(P_k, P_\ell)$-Arrowing is coNP-complete for all $k$ and $\ell$ aside from
the exceptions listed above~\cite{hassan2023}. We note that $(P_4, P_4)$-Arrowing was shown to be coNP-complete by Rutenburg much earlier~\cite{rut:c:graph-coloring}.

\noindent \textbf{Matching removal.} 
A matching is a collection of disjoint edges in a graph. 
Interestingly, there is an overlap between matching removal problems, defined below, and $(P_3, H)$-Arrowing. 

\begin{problem}[$\Pi$-Matching Removal~\cite{DBLP:journals/anor/LimaRSS22}]
Let $\Pi$ be a fixed graph property.
For a given graph $G$, does there exist a matching $M$ such that $G' = (V(G), E(G) - M )$ has property $\Pi$?
\end{problem}

Let $\Pi$ be the property that $G$ is $H$-free for some fixed graph $H$. Then, 
this problem is equivalent to $(P_3, H)$-Nonarrowing; a lack of red $P_3$'s implies that only disjoint edges can be red, as in a matching, and the remaining (blue) subgraph must be $H$-free.
Lima et al.\ showed that the problem is NP-complete when $\Pi$ is the property that $G$ is acyclic~\cite{lima2017decycling}, or that $G$ contains no odd cycles~\cite{lima2018bipartizing,DBLP:journals/anor/LimaRSS22}.

\noindent \textbf{Ramsey and Folkman numbers.}
The major research avenue involving arrowing is that of finding Ramsey and Folkman numbers. Ramsey numbers are concerned with the smallest complete graphs with arrowing properties, whereas Folkman numbers allow for any graph with some extra structural constraints.
We refer the reader to surveys by Radziszowski~\cite{ds1}
and Bikov~\cite{bikov2018} for more information on Ramsey and Folkman numbers, respectively.

\section{Preliminaries}
\label{sec:prelim}

\subsection{Notation and terminology}

All graphs discussed in this work are simple and undirected. $V(G)$ and $E(G)$ denote the vertex and edge set of a graph $G$, respectively. We denote an edge in $E(G)$ between $u,v \in V(G)$ as $(u,v)$.
For two disjoint subsets $A, B \subsetneq V(G)$, $E_G(A,B)$ refers to the edges with one vertex in $A$ and one vertex in $B$.
For a subset $A \subseteq V(G)$, $G[A]$ denotes the induced subgraph on $A$.
The neighborhood of a vertex $v \in V(G)$ is denoted as $N_G(v) = \{u ~|~ (u,v) \in E(G)\}$ and its degree as $d_G(v) := |N_G(v)|$.
A connected graph is called $k$-connected if it has more than $k$ vertices and remains connected whenever fewer than $k$ vertices are removed.

Vertex identification is the process of replacing two vertices $u$ and $v$ with a new vertex $w$ such that $w$ is adjacent to all remaining neighbors $N(u) \cup N(v)$.
For edges $(u,v)$ and $(p,q)$,
edge identification is the process of identifying $u$ with $p$, and $v$ with $q$.

The path, cycle, and complete graphs on $n$ vertices are denoted as $P_n$, $C_n$, and $K_n$, respectively. The complete graph on $n$ vertices missing an edge is denoted as $J_n$.
$K_{1,n}$ is the star graph on $n+1$ vertices.
For $n \geq 3$, we define $TK_n$ (tailed $K_n$) as the graph obtained by identifying a vertex of a $K_2$ and any vertex in $K_n$. The vertex of degree one in a $TK_n$ is called the tail vertex of $TK_n$.

We introduce a new notion, defined below, to measure how ``connected'' a pair of edges in a graph is, which will be useful when identifying edges between multiple copies of the same graph.
Examples have been shown in Figure~\ref{fig:mepl-eg}.

\begin{definition}
    For a pair of edges $e, f \in E(G)$, we define its \textbf{edge pair linkage}, 
$\epl_G(e,f)$, as the number of edges adjacent to both $e$ and $f$. It is infinity if $e$ and $f$ share at least one vertex. Note that $\epl_G(e, f) \leq 4$ when $e$ and $f$ share no vertices. 
For a graph $G$, we define $\mepl(G) := \min_{e, f \in E(G)} \epl(e,f) $ as the minimum edge pair linkage across all edge pairs.
\end{definition}

It is easy to see that the only graphs with $\mepl(G) = \infty$ are the star graphs, $K_{1,n}$, and $K_3$ since these are the only graphs that do not have disjoint edges.
When the context is clear, the subscript $G$ for $\epl_G(\cdot)$, $d_G(\cdot)$, etc.\ will be omitted. 

An $(F,H)$-good coloring of a graph $G$ is a red/blue coloring of $E(G)$ where the red subgraph is $F$-free, and the blue subgraph is $H$-free. We say that $G$ is $(F, H)$-good if it has at least one $(F,H)$-good coloring.
When the context is clear, we will omit $(F, H)$ and refer to the coloring as a good coloring. 

\subsection{Combining graphs}
\label{subsec:comb}

Suppose $H$ is a $3$-connected graph. Consider the graph $A_{H, e}$, obtained by taking two disjoint copies of $H$ and identifying some arbitrary $e \in E(H)$ from each copy. Observe that no new copy of $H$---referred to as a rogue copy of $H$ in $A_{H,e}$---is constructed during this process; if a new $H$ is created then it must be disconnected by the removal of the two identified vertices, contradicting $H$'s $3$-connectivity.
This is especially useful when proving the existence of good colorings; to show that a coloring of $A_{H, e}$ has no blue $H$, we know that only two copies of $H$ need to be looked at, without worrying about any other rogue $H$. 
Unfortunately, this property does not hold for all $2$-connected graphs. Instead, we 
use minimum edge pair linkage to explore sufficient conditions that allow us to combine multiple copies of graphs without concerning ourselves with any potential rogue copies. 

\begin{figure}
    \centering
    \includegraphics[width=0.95\textwidth]{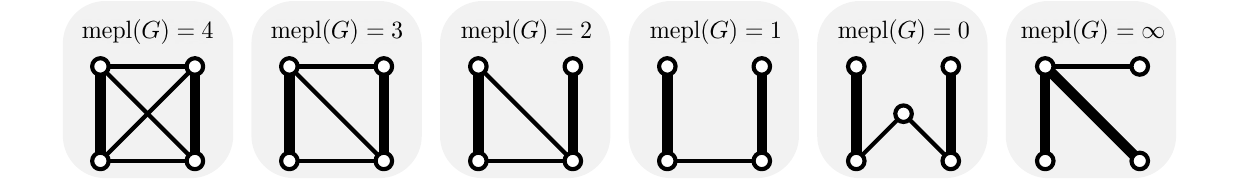}
    \caption{Graphs with different $\mepl(G)$ values. Bold edges have $\epl_G = \mepl(G)$.
    }
    \label{fig:mepl-eg}
\end{figure}
\begin{figure}
    \centering
    \includegraphics[width=0.95\textwidth]{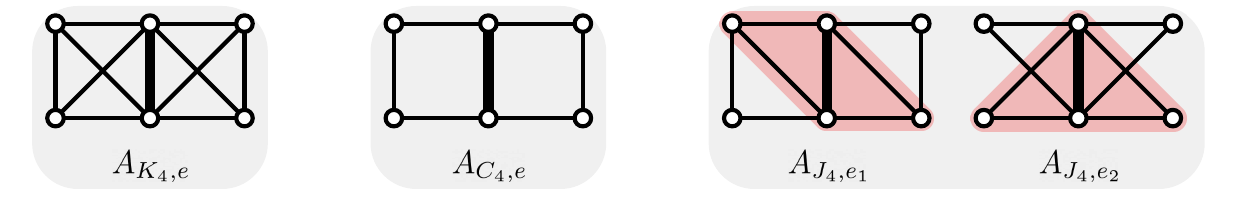}
    \caption{Proof for Lemma~\ref{lem:mlep2-combine} when $|V(H)| = 4$. It is easy to see that $A_{H,e}$ for $H \in \{C_4, K_4\}$ has exactly two copies of $H$ for arbitrary $e$. Moreover, constructing $A_{J_4, e}$ introduces a new $J_4$ (highlighted in red) for both nonisomorphic choices of $e \in E(J_4)$. Identified edges are bolded.
    }
    \label{fig:mepl-n4}
\end{figure}

\begin{lemma}
\label{lem:mlep2-combine}
Suppose $H$ is a $2$-connected graph such that $|V(H)| \geq 4$ and $\mepl(H) \geq 2$. 
Given $H$ and $e \in E(H)$, let $A_{H, e}$ be the graph obtained by taking two disjoint copies of $H$ and identifying $e$ from each copy. 
For all such $H$, except $J_4$, 
there exists $e \in E(H)$ such that $A_{H,e}$ has exactly two copies of $H$, i.e., no new copy of $H$ is formed after identifying $e$.
\end{lemma}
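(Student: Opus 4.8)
The plan is to analyze the possible \emph{rogue} copies of $H$ in $A_{H,e}$ (copies that are neither of the two originals) and to choose $e$ so that none can occur. Fix $e=(u,v)$ and let $R$ be a rogue copy. Since the two glued copies overlap only in $\{u,v\}$, any rogue must use at least one private vertex from each copy (otherwise it lies inside a single copy, whose only copy of $H$ is that copy itself). Thus $\{u,v\}$ separates these private vertices in $R$, and as $R\cong H$ is $2$-connected it has no cut vertex; if $R$ contained only one of $u,v$, that vertex would separate $R$. Hence $R$ must contain \emph{both} $u$ and $v$, and I write $R=R_1\cup R_2$, where $R_i=R\cap(\text{copy }i)$, with $V(R_1)\cap V(R_2)=\{u,v\}$ and $S_i:=V(R_i)\setminus\{u,v\}\neq\emptyset$.

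The first key step uses edge pair linkage. If some edge $g_1$ of $R$ has both endpoints in $S_1$ and some edge $g_2$ has both endpoints in $S_2$, then $g_1,g_2$ are disjoint and, because the copies share only $u,v\notin g_1\cup g_2$, no edge of $A_{H,e}$ is adjacent to both; hence $\epl_R(g_1,g_2)=0$, contradicting $\mepl(R)=\mepl(H)\geq 2$. Therefore, up to swapping the copies, every edge of $R_1$ meets $\{u,v\}$, and the $2$-connectivity of $R$ then forces each vertex of $S_1$ to be adjacent to both $u$ and $v$ and to have degree exactly $2$ in $R$; that is, $R_1$ is a ``book'' with spine $e$ whose pages are common neighbors of $u$ and $v$.

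Two consequences drive the case split. First, a page has degree $2$ in $R\cong H$, so $\delta(H)=2$; thus whenever $\delta(H)\geq 3$ no rogue exists for any $e$, which disposes of all such $H$ (in particular $K_n$ for $n\geq 4$). Second, the pages are common neighbors of $u$ and $v$, so any rogue forces $e$ to lie in a triangle. Since $\delta(H)=2$ in the remaining case, I would fix a degree-$2$ vertex $a$ with $N(a)=\{x,y\}$ and choose $e=(a,x)$; this edge lies in no triangle unless $x\sim y$, and when it is triangle-free no book can form, so no rogue exists and we are done.

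The main obstacle is the case $x\sim y$, i.e.\ when every edge incident to $a$ (and potentially every edge of $H$) lies in a triangle, so no triangle-free edge is available. Keeping $e=(a,x)$: because $a$'s only neighbors are $x,y$, the unique common neighbor of $a$ and $x$ in each copy is $y$, so the book side contributes at most one page; together with the fact that the shared image of $a$ has degree at most $3$ in $A_{H,e}$, I expect the degree and adjacency constraints of any reconstructed $R$ to be incompatible with $H$ once $|V(H)|\geq 5$. Making this precise---showing that no assembly of a single-page book with $R_2$ is isomorphic to $H$ for $n\geq 5$---is the crux of the argument. Finally, the three $2$-connected graphs on four vertices, $C_4$, $J_4$, and $K_4$, are handled directly (Figure~\ref{fig:mepl-n4}), which is where $J_4$ emerges as the unique graph admitting a rogue copy for every choice of $e$.
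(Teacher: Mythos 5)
Your structural analysis of a rogue copy $R$ matches the paper's almost exactly: you show $R$ must contain both identified vertices and private vertices from each copy, you use $\mepl(H)\geq 2$ to rule out private edges on both sides simultaneously, you conclude that the private vertices of the ``empty'' side form degree-$2$ common neighbors of $u$ and $v$ (hence $\delta(H)=2$), and you then pick $e$ incident to a degree-$2$ vertex --- the same choice the paper makes. Your observation that no rogue can exist when the chosen edge lies in no triangle is correct and disposes of that subcase cleanly.

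However, there is a genuine gap in the remaining case $x\sim y$, and you acknowledge it yourself (``Making this precise \ldots is the crux of the argument''). Your sketch there is not sufficient as stated: bounding the book side to ``at most one page'' and noting $d_{A_{H,e}}(a)\leq 3$ does not rule out a rogue, because the \emph{other} side $R_2$ may contain $n-3\geq 2$ private vertices with arbitrary private edges, and nothing in your degree bookkeeping prevents such an $R_1\cup R_2$ from being isomorphic to $H$. The missing ingredient is a \emph{second} application of the edge-pair-linkage hypothesis to a pair of edges of $R$ that straddle the two glued copies. The paper does exactly this: with $v$ the degree-$2$ endpoint of $e$ and $w_X,w_Y$ its private neighbors, it first forces $\{u,v,w_X,w_Y\}\subseteq V(Z)$ and $(v,w_X),(v,w_Y)\in E(Z)$, then takes any fifth vertex $p$ of the rogue, w.l.o.g.\ on the $X$-side; either $p$ has a private neighbor $q\neq w_X$, whence $\epl_Z((p,q),(v,w_Y))=0$, or $p$ is forced to be adjacent to both $u$ and $w_X$, whence $\epl_Z((p,w_X),(v,w_Y))\leq 1$ --- both contradict $\mepl(H)\geq 2$. (Your case can be closed the same way: if $S_2$ has a private edge $g_2$, pair it against $(x,y_1)$; if not, $S_2\subseteq N(a)$ forces $|S_2|\leq 1$, contradicting $|S_2|=n-3\geq 2$.) Until that step is supplied, the proof is incomplete for every $H$ in which the neighbors of each degree-$2$ vertex are adjacent. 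The $n=4$ base case is handled the same way in both arguments.
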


\begin{proof}
For $|V(H)| = 4$, the statement is easily observed as the only cases to consider are $C_4$, $J_4$, and $K_4$. See Figure~\ref{fig:mepl-n4}. 
Suppose $|V(H)| \geq 5$. 
We will first construct $A_{H,e}$ using an arbitrary $e = (u,v)  \in E(H)$ and assume that a new copy of $H$ is constructed after identifying $e$. 
Let $X$ and $Y$ denote the subgraphs of $A_{H,e}$ corresponding to the two copies of $H$ in $A_{H,e}$ that identify $(u,v)$. It follows that, $V(X) \cap V(Y) = \{u,v\}$ and $V(X) \cup V(Y) = V(A_{H,e})$. Similarly, $E(X) \cap E(Y) = \left\{ e \right\} $ and $E(X) \cup E(Y) = E(A_{H,e})$.
Suppose $Z$ is a subgraph corresponding to another copy of $H$ in $A_{H,e}$, i.e. $V(Z) \not= V(X)$ and $V(Z) \not= V(Y)$.
Let $V_{Z_X} = (V(X) - V(Y)) \cap V(Z)$ be the vertices of $Z$ only in $X$, and $E_{Z_X} = \left\{(p,q) \in E(Z) ~|~ p, q \in V_{Z_X} \right\}$.
$V_{Z_Y}$ and $E_{Z_Y}$ are defined similarly.
In the following claim, we observe the properties of $Z$ and the original graph $H$ with $\mepl(H) \geq 2$.
\begin{restatable}[]{claim}{meplclaim}
\label{claim:lem:mlep2-combine}
If $Z$ exists in $A_{H,e}$, the following must be true:
\textit{(1)} Both $V_{Z_X}$ and $V_{Z_Y}$ are nonempty, 
\textit{(2)} $u \in V(Z)$ and $v \in V(Z)$,
\textit{(3)} at least one of $E_{Z_X}$ and $E_{Z_Y}$ is empty, and 
\textit{(4)} there exists $w \in V(H)$ with $d_H(w) = 2$.
\end{restatable}
The proof of Claim~\ref{claim:lem:mlep2-combine} is given in Appendix~\ref{app:a}. 
Now, let $e = (u, v) \in E(H)$ such that $d_H(v) = 2$. Consider the graph $A_{H, e}$.
Note that since $d_H(v) = 2$, we have $d_{A_{H,e}}(v) = d_H(v) + d_H(v) - 1 = 3$. Let $w_X$ and $w_Y$ be the neighbors of $v$ in $V(X) - V(Y)$ and $V(Y) - V(X)$, respectively.
We know that $V(Z)$ includes $u$ and $v$ from Claim~\ref{claim:lem:mlep2-combine}(2). We now show that $w_X$ and $w_Y$ must also belong to $V(Z)$: if neither belong to $V(Z)$, then $d_Z(v) = 1$, contradicting $H$'s $2$-connectivity (removing $u$ disconnects $H$). Suppose w.l.o.g., that $w_X \not\in V(Z)$. Since $V_{Z_X}$ is nonempty and $H$ is connected, there is at least one vertex in $V_{Z_X}$ connected to $u$. However, removing $u$ would disconnect $Z$, again contradicting $H$'s $2$-connectivity. Thus, $\{u,v,w_X,w_Y\} \subseteq V(Z)$.
Using a similar argument, we can also show that both $(v,w_X)$ and $(v,w_Y)$ must belong to $E(Z)$.

Let $p$ be a vertex in $V(Z) - \{u,v,w_X,w_Y\}$. We know $p$ exists since $|V(H)| \geq 5$. W.l.o.g., we assume that $p \in V_{Z_X}$. Note that $p$ cannot be adjacent to $v$ since $N_{A_{H,e}}(v) = \{u,w_X, w_Y\}$.
We now consider the neighborhood of $p$ in $Z$. If $p$ has a neighbor $ q \in V_{Z_X} - \{w_X\}$, then $\epl_Z\left( (p, q), (v, w_Y)\right) = 0$, contradicting our assumption that $\mepl(H) \geq 2$. Since $d_Z(p) = d_H(p) \geq 2$ and the only options remaining for $p$'s neighborhood are $u$ and $w_X$, we must have that $p$ is connected to both $u$ and $w_X$. In this case, we have that $\epl_Z( (p, w_X), (v,w_Y) ) \leq 1$, which is still a contradiction. 
\end{proof}

\section{The complexity of \texorpdfstring{$\mathbf{(P_3, H)}$-Arrowing}{(P3, H)-Arrowing}}
\label{sec:p3complex}

In this section, we discuss our complexity results stated in Theorem~\ref{thm:main}. 
We first show that $(P_3, K_3)$-Arrowing is in P (Theorem~\ref{thm:p3k3-p}).
The rest of the section is spent setting up our hardness proofs for all $2$-connected $H \not= K_3$, which we prove formally in Theorems~\ref{thm:p3-hard-1} and~\ref{thm:p3-hard-2}. 
\begin{theorem}
\label{thm:p3k3-p}
$(P_3, K_3)$-Arrowing is in P.
\end{theorem}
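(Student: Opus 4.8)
The plan is to work with the complement. Since $\p$ is closed under complementation, it suffices to place $(P_3,K_3)$-Nonarrowing in $\p$, i.e.\ to decide in polynomial time whether a given $G$ admits a $(P_3,K_3)$-good coloring. Such a coloring exists exactly when the red edges form a matching $M$ (no red $P_3$) whose removal leaves $G-M$ triangle-free. As already noted, this is the $K_3$-free Matching Removal problem: I am looking for a matching $M$ that \emph{meets every triangle} of $G$ (every $K_3$ contains an edge of $M$).

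First I would record two structural facts that sharply constrain any solution. Only edges lying on some triangle can usefully belong to $M$, so $M$ may be assumed to be a set of triangle-edges. Moreover, the three edges of a triangle pairwise share a vertex, so a matching contains at most one of them; combined with the requirement that $M$ meets the triangle, \emph{every triangle has exactly one red edge} in any valid solution. This yields a forcing rule: if an edge $e=(u,v)$ lies on at least three triangles $uvw_1,\dots,uvw_k$ and $e\notin M$, then each $uvw_i$ must be met by a red edge at $u$ or at $v$; but a matching has at most one red edge at $u$ and one at $v$, and each such edge meets only one of the $uvw_i$, so at most two of the $k\ge 3$ triangles are met, a contradiction. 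Hence every edge on $\ge 3$ triangles is forced into $M$. I would apply this as unit propagation: commit forced edges to $M$, forbid all edges adjacent to them, delete the triangles they meet, and report infeasibility if some triangle loses all of its candidate edges.

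The core step, and the main obstacle, is deciding feasibility once no further edge is forced. Local forcing alone does not suffice: the wheel $W_4$ (a hub joined to a $4$-cycle) has no edge on three triangles, yet its four hub-triangles compete for the hub's single matching slot and no triangle-meeting matching exists, so $W_4\ra(P_3,K_3)$. The plan is therefore to reduce the residual instance to a polynomially solvable problem that captures this global competition. The binary conflicts are benign---``$e,f$ adjacent $\Rightarrow$ not both red'' is an implication, and ``exactly one red per triangle'' gives further implications---so I would cast feasibility either as an implication (2-SAT-style) system or, encoding each triangle's ``choose one incident edge'' as a gadget, reduce to the existence of a matching solvable by Edmonds' algorithm, and prove equivalence. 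The hard part will be faithfully handling densely overlapping triangles---those sharing an edge or a common vertex, as around the hub of $W_4$---so that the reduction enforces exactly one red edge per triangle without admitting spurious colorings; this is precisely where the naive greedy/forcing approach breaks and where the polynomial-time argument must do its real work.
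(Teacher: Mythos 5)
Your setup is correct and you isolate the right combinatorial fact---the red edges form a matching, and since the three edges of a triangle pairwise intersect, a valid matching contains \emph{exactly} one edge of every triangle---but the proof has a genuine gap: the core algorithmic step is left as a plan with two candidate directions, neither of which is carried out. The 2-SAT direction does not work as stated, because the covering constraint ``at least one of the three edges of this triangle is red'' is a width-3 clause; only the exclusion constraints are binary. The gadget-to-matching direction could in principle work, but you explicitly defer the construction and the equivalence proof, and that is exactly where the content of the theorem lies. The forcing rule for edges on three or more triangles and the $W_4$ example are correct observations, but they are preprocessing that the final argument does not need.

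The observation you already have closes the gap immediately, with no gadgets, no forcing, and no residual instance. Assign each edge $e$ the weight $\gamma(e)$ equal to the number of triangles containing $e$, and let $t$ be the total number of triangles in $G$. Any matching $M$ has weight $\sum_{e\in M}\gamma(e)\le t$, because---as you noted---no two edges of a triangle can both lie in a matching, so each triangle is counted at most once; equality holds precisely when every triangle contains an edge of $M$, i.e., precisely when $M$ is the red edge set of a $(P_3,K_3)$-good coloring. Hence $G$ is $(P_3,K_3)$-good if and only if the maximum weight of a matching in $(G,\gamma)$ equals $t$, which is decidable in polynomial time by Edmonds' algorithm. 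This is the paper's proof; your ``exactly one red edge per triangle'' lemma is the whole of it once you phrase it as a weight bound rather than as a constraint-satisfaction system.
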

\begin{proof}
    Let $G$ be the input graph. Let $\gamma : E(G) \ra \mathbb{Z}_{\geq 0}$ be a function that maps each edge to the number of triangles it belongs to in $G$. Note that $\gamma$ can be computed in $O(|E(G)|^3)$ time via brute-force. 
    Let $t$ be the number of triangles in $G$.
    Clearly, any $(P_3, K_3)$-good coloring of $G$ corresponds to a matching of total weight $\geq t$; otherwise, there would be some $K_3$ in $G$ in the blue subgraph of $G$.
    Now, suppose there exists a matching $M$ with weight at least $t$ that does not correspond to a $(P_3, K_3)$-good coloring of $G$.
    Then, there must exist 
    a copy of $K_3$ in $G$
    with at least two of its edges in $M$. 
    However, 
    since $K_3$'s maximal matching contains only 
    one edge, $M$ must contain a $P_3$, which is a contradiction. 
    Thus, we can solve $(P_3, K_3)$-Arrowing
    by finding the maximum weight matching of $(G, \gamma)$, which can be done in polynomial time~\cite{edmonds1965paths}, and checking if said matching has weight equal $t$.
\end{proof}

To show that $(P_3, H)$-Arrowing is coNP-complete we show that its complement, $(P_3, H)$-Nonarrowing, 
is NP-hard. We reduce from the following NP-complete SAT variant:

\begin{problem}[$(2,2)$-3SAT~\cite{berman200322sat}]
Let $\phi$ be a 3CNF formula where each clause has exactly three distinct variables, and each variable appears exactly four times: twice unnegated and twice negated. Does there exist a satisfying assignment for $\phi$?
\end{problem}

Important definitions and the idea behind our reduction are provided in Section~\ref{sec:reduc-spesh}, while the formal proofs are presented in Section~\ref{sec:reduc-proofs}.

\subsection{Defining special graphs and gadgets for proving hardness}
\label{sec:reduc-spesh}

We begin by defining a useful term for specific vertices in a coloring, after which we describe and prove the existence of some special graphs. We then define the two gadgets necessary for our reduction and describe how they provide a reduction from $(2,2)$-3SAT.

\begin{definition}
For a graph $G$ and a coloring $c$,
a vertex $v \in V(G)$ is called a \textbf{free vertex} if it is not adjacent to any red edge in $c$. Otherwise, it is called a \textbf{nonfree vertex.}
\end{definition}

Note that in any $(P_3, H)$-good coloring, a vertex can be adjacent to at most one red edge, otherwise, a red $P_3$ is formed. Intuition suggests that,
by exploiting this restrictive property, we could freely force ``any'' desired blue subgraph if we can ``append red edges'' to it. This brings us to our next definition, borrowed from Schaefer's work on $(F,H)$-Arrowing~\cite{Scha}:

\begin{definition}[\cite{Scha}]
A graph $G$ is called an \textbf{$(F, H)$-enforcer} with \textbf{signal vertex $v$} if 
it is $(F,H)$-good and 
the
graph obtained from $G$ by attaching a new edge to $v$ has the property that this edge
is colored blue in all $(F, H)$-good colorings.
\end{definition}

Throughout our text, when the context is clear,
we will use the shorthand \textbf{append an enforcer to $u \in V(G)$} to mean we will 
add an $(F, H)$-enforcer to $G$ and identify its signal vertex with $u$.
We prove the existence of $(F,H)$-enforcers when $F = P_3$ below. 
This proof provides a good example of the role $2$-connectivity plays while constructing our gadgets, showcasing how we combine graphs while avoiding constructing new copies of $H$. 
The arguments made are used frequently in our other proofs as well.

\begin{figure}[t]
    \centering
    \includegraphics[width=0.95\textwidth]{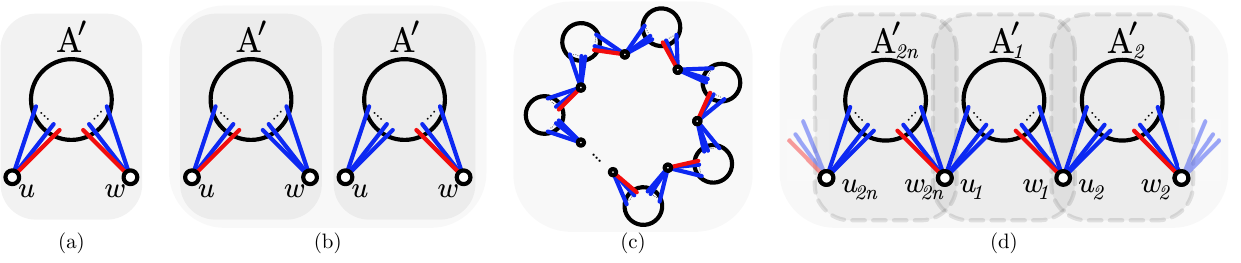}
    \caption{\textbf{(a)} The graph $A'$ when $u$ and $w$ are adjacent to a red edge.
    \textbf{(b)} The graph $A'$ when either $u$ or $w$ is adjacent to a red edge.
    \textbf{(c)} The graph $B$.
    \textbf{(d)} A zoomed in look at $B$.
    }
    \label{fig:enforcer}
\end{figure}

\begin{figure}[t]
    \centering
    \includegraphics[width=0.95\textwidth]{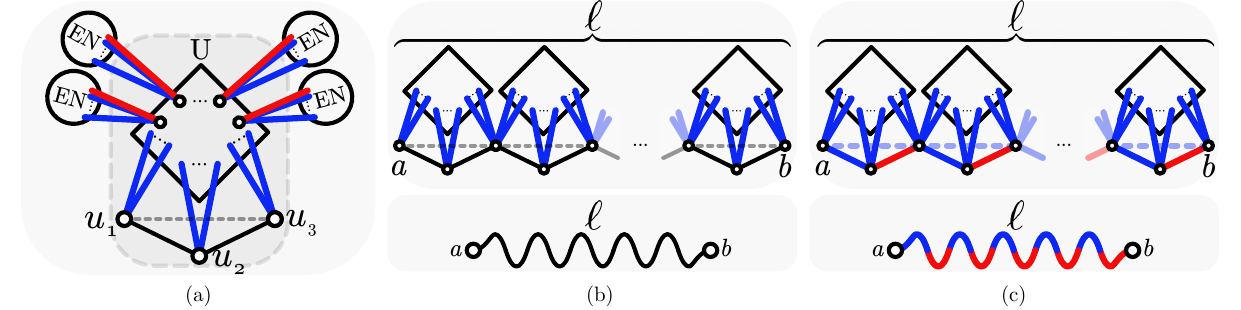}
    \caption{\textbf{(a)} The $(P_3, H)$-signal extender described in Lemma~\ref{lem:sig-exst}, where enforcers are labeled $EN$ and the copy of $H$ is labeled $U$. 
    Edges whose colors are fixed in all $(P_3, H)$-good colorings have been pre-colored. 
    The edge $(u_1,u_3)$ is dashed 
    to signify it may or may not exist in the construction. \textbf{(b)} At the top, we show how extenders can be connected sequentially to form arbitrarily large extenders. The enforcers have been removed from the illustration for clarity.
    The in- and out-vertices are marked $a$ and $b$, respectively. At the bottom, we show how signal extenders will be depicted in our figures,
    where $\ell$ is the number of 
    concatenated constructed 
    extenders. 
    \textbf{(c)} At the top, we show the coloring of the signal extender when vertex $a$ is a free vertex. At the bottom, we show the corresponding coloring of our representation of signal extenders.
    }
    \label{fig:signal-extender}
\end{figure}

\begin{lemma}
\label{lem:p3enforce}
$(P_3, H)$-enforcers exist for all $2$-connected $H$.
\end{lemma}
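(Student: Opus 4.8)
The plan is to reduce the lemma to a single forcing task: construct a $(P_3,H)$-good graph $B$ together with a designated vertex $v$ (the signal vertex) such that every $(P_3,H)$-good coloring of $B$ makes $v$ nonfree. This suffices. If we append a fresh pendant edge $(v,x)$, it cannot be colored red, since $v$ is already incident to a red edge and a red $(v,x)$ would create a red $P_3$; hence it is blue in all good colorings. Moreover, because $H$ is $2$-connected it has minimum degree at least $2$, so the degree-$1$ vertex $x$ lies in no copy of $H$, and the new blue edge creates no blue $H$. Thus $(B,v)$ is a $(P_3,H)$-enforcer as soon as we confirm that $B$ itself is $(P_3,H)$-good, which I would do by exhibiting one explicit coloring.

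The only forcing mechanism available is that every copy of $H$ must contain a red edge while the red edges form a matching (no red $P_3$). I would exploit this through two combinable observations. First, an edge $f=(a,b)$ is forced blue whenever both $a$ and $b$ are already incident to red edges other than $f$, as coloring $f$ red would yield a red $P_3$. Second, I would build a two-port base gadget $A'$ by gluing a few copies of $H$ along a carefully chosen edge $e=(u,w)$, designed so that no good coloring leaves both $u$ and $w$ free: were both free, each copy's red edge would be pushed into a common interior, and the gluing would then force an all-blue copy of $H$, a contradiction. Figure~\ref{fig:enforcer}(a,b) records the two admissible outcomes---both ports nonfree, or exactly one---which simultaneously shows that $A'$ is good and that it is flexible enough to be reused.

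I would then assemble $B$ from copies of $A'$ and of $H$ so that the many local ``at least one endpoint nonfree'' constraints cascade. Applying the first observation repeatedly to force successive edges blue, the red matching is squeezed until it is forced to use an edge incident to the target vertex $v$; this is the vertex promoted to signal vertex. A final explicit good coloring of all of $B$ certifies that $B$ is $(P_3,H)$-good.

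The main obstacle is that gluing copies of $H$ can spawn rogue copies of $H$ that would invalidate the packing argument, and controlling these is exactly what $2$-connectivity and edge pair linkage provide. Whenever I glue along an edge $e$ I would invoke Lemma~\ref{lem:mlep2-combine} (for $\mepl(H)\ge 2$, $|V(H)|\ge 4$, $H\neq J_4$) so that $A_{H,e}$ contains exactly the two intended copies, and I would settle the excluded small or low-linkage graphs (and degenerate cases such as $J_4$ and $K_3$) by direct inspection. The second, subtler obstacle is localization: since a single copy of $H$ can always be broken by a red edge avoiding any prescribed vertex, no single gadget can pin $v$, so the forcing must come from the global matching-budget argument above; verifying that this cascade actually terminates at $v$, and that it nowhere leaves an all-blue copy of $H$, is where the bulk of the care lies.
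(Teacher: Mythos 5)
There is a genuine gap at the heart of your plan: the two-port base gadget $A'$ is never actually obtained. You propose to build it by ``gluing a few copies of $H$ along a carefully chosen edge $e=(u,w)$'' so that no good coloring leaves both $u$ and $w$ free, arguing that otherwise ``each copy's red edge would be pushed into a common interior.'' But the interiors of copies of $H$ glued along a single edge are pairwise \emph{disjoint}, so each copy can independently place its one required red edge deep inside itself; no number of copies glued on $e$ ever forces $u$ or $w$ to be nonfree. This is exactly the localization obstacle you name at the end, and your proposal defers its resolution (``where the bulk of the care lies'') rather than supplying it. The paper avoids explicit construction entirely: it takes a \emph{minimally bad} graph $A$ with $A \ra (P_3,H)$ (such graphs exist, e.g.\ large complete graphs arrow $(P_3,H)$ since a $P_3$-free red subgraph is a matching), sets $A' = A - (u,w)$, and gets the forcing property for free from minimality---if both $u$ and $w$ were free, recoloring $(u,w)$ red would give a good coloring of $A$. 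Your subsequent cascading of copies of $A'$ to pin the signal vertex is in the same spirit as the paper's cyclic chain of $2n$ copies (Figure~\ref{fig:enforcer}), but without a working $A'$ the argument does not get off the ground.

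A secondary problem: you propose to control rogue copies of $H$ via Lemma~\ref{lem:mlep2-combine} and to ``settle the excluded small or low-linkage graphs by direct inspection.'' The graphs with $\mepl(H) \le 1$ form an infinite family, so inspection is not available. The paper's gluing in this lemma is vertex identification of the ports $w_i = u_{i+1}$ (not edge identification of copies of $H$), and $2$-connectivity of $H$ alone rules out rogue copies there: a new $H$ meeting two consecutive blocks would be cut by the single shared vertex, and one meeting all $2n$ blocks would need at least $2|V(H)|$ vertices. No appeal to edge pair linkage is needed for the enforcer.
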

\begin{proof}
We extend an idea presented by Burr~\cite{Bu3}.
Let $A$ be a ``minimally bad'' graph such that $A \ra (P_3, H)$, but removing any edge $e$ from $A$ gives a $(P_3,H)$-good graph. 
Let $e = (u, w)$ and $A' = A - e$. 
This graph is illustrated in Figure~\ref{fig:enforcer}.
Observe that in any $(P_3,H)$-good coloring of $A'$, at least one edge adjacent to $u$ or $w$ must be red; otherwise, such a coloring and a red $(u,w)$ gives a good coloring for $A$, contradicting the fact that $A \ra (P_3, H)$. 
If both $u$ and $w$ are
adjacent to red edges in all good colorings, then $A'$ is a $(P_3, H)$-enforcer, and either $u$ or $w$ can be the signal vertex .

If there exists a coloring where only one of $\{u,w\}$ is adjacent to a red edge, then we can construct an enforcer, $B$, as follows.
Let $n = |V(H)|$.
Make $2n$ copies of $A'$, where $u_i$ and $w_i$ refer to the vertex $u$ and $w$ in the  $i^{\text{th}}$ copy of $A'$, called $A'_i$.
Now, identify each $w_i$ with $u_{i+1}$ for $i \in \{1,2,\ldots,2n-1\}$, and identify $w_{2n}$ with $u_1$ (see Figure~\ref{fig:enforcer}). 
Although $w_i$ and $u_{i+1}$ are now the same vertex in $B$, we will use their original names to make the proof easier to follow.
It is easy to see that when $w_1$ is adjacent to a red edge in $A'_1$, then $u_2$ cannot be adjacent to any red edge in $A'_2$, causing $w_2$ to be adjacent to a red edge in $A'_2$, and so on. A similar argument holds when considering the case where $u_1$ is adjacent to a red edge in $A'_1$.
Since every $u_i$ and $w_i$ is adjacent to a red edge, any of them can be our desired signal vertex.

Note that $B$ must be $(P_3, H)$-good because each $A'_i$ is $(P_3, H)$-good, and no new $H$ is made during the construction of the graph; since $H$ is $2$-connected, $H$ cannot be formed between two copies of $A'_i$'s, otherwise there is a single vertex that can be removed to disconnect such an $H$, contradicting $2$-connectivity. Thus, any new copy of $H$ must go through all $A_i$'s, which is not possible since such an $H$ would have $\geq 2|V(H)|$ vertices.
\end{proof}
Using enforcers, we construct another graph that plays an important role in our reductions.

\begin{definition}
A graph $G$ is called a \textbf{$(P_3, H)$-signal extender} with \textbf{in-vertex $a$} and \textbf{out-vertex $b$} if 
it is $(P_3,H)$-good and, in all $(P_3, H)$-good colorings,
$b$ is nonfree if $a$ is free.
\end{definition}

\begin{lemma}
\label{lem:sig-exst}
$(P_3, H)$-signal extenders exist for all $2$-connected $H$.
\end{lemma}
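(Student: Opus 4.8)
The plan is to build a $(P_3, H)$-signal extender directly from the $(P_3, H)$-enforcers guaranteed by Lemma~\ref{lem:p3enforce}, exploiting the restrictive property that in any $(P_3, H)$-good coloring every vertex is adjacent to at most one red edge. Recall the desired behavior: whenever the in-vertex $a$ is \emph{free} (adjacent to no red edge), the out-vertex $b$ must be \emph{nonfree}. The guiding intuition is to take a single copy $U$ of $H$ together with a few designated vertices, and to append enforcers so that all but a controlled set of edges of $U$ are forced blue. Since $U$ is a copy of $H$ and $H$ is not $H$-free, $U$ cannot be colored entirely blue in a good coloring; hence at least one edge of $U$ must be red, and by routing which vertices are incident to that forced-red edge we can propagate ``redness'' from the neighborhood of $a$ to $b$.

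Concretely, I would pick a copy $U$ of $H$, single out three vertices $u_1, u_2, u_3$ of $U$ (with the dashed edge $(u_1,u_3)$ present or absent as the incidence structure of $H$ dictates, exactly as drawn in Figure~\ref{fig:signal-extender}(a)), and append an enforcer $EN$ to every vertex of $U$ except these distinguished ones. By the definition of an enforcer, the appended edge is blue in all good colorings, which forces the vertex it attaches to be free within $U$ (its one permissible red edge is ``used up'' by the enforcer, or rather: the enforcer's signal edge is blue and the enforced vertex's incident red budget is constrained), so every edge of $U$ incident to an enforced vertex is forced blue. The upshot is that the only edges of $U$ that can carry red are those among $\{u_1,u_2,u_3\}$. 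First I would verify the forward direction: set $a$ to be one of these special vertices and $b$ another; if $a$ is free, then since $U$ is a full copy of $H$ it cannot be all-blue, so some edge among the unconstrained ones must be red, and a short case analysis on which of the at-most-three eligible edges is red shows $b$ must then be incident to a red edge, i.e.\ $b$ is nonfree. Then I would check that the whole construction is $(P_3, H)$-good by exhibiting an explicit good coloring (color one eligible edge of $U$ red and everything else consistently), using the combining arguments from Lemma~\ref{lem:p3enforce} to argue that appending enforcers introduces no rogue copy of $H$: since $H$ is $2$-connected, no new $H$ can straddle $U$ and an enforcer through a single cut vertex.

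The sequential concatenation in Figure~\ref{fig:signal-extender}(b) is then a routine but necessary step: identifying the out-vertex of one extender with the in-vertex of the next yields an arbitrarily long extender, and the free/nonfree propagation composes transitively (if $a$ free forces $b$ nonfree, and the next block's in-vertex is $b$, we must take care that ``nonfree at $b$'' correctly triggers the next stage). I expect the main obstacle to be pinning down the precise local structure at $\{u_1,u_2,u_3\}$ so that the implication ``$a$ free $\Rightarrow$ $b$ nonfree'' holds for \emph{every} $2$-connected $H$ simultaneously, since the number and arrangement of eligible red edges depends on whether $(u_1,u_3)$ is an edge of $H$ and on how these three vertices sit inside $H$; this is exactly why the dashed edge is drawn conditionally. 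Handling this uniformly—rather than case-by-case on $H$—while also guaranteeing that no enforcer-forced constraint accidentally makes $U$ colorable all-blue (which would break the extender) is where the care lies, and I would lean on the at-most-one-red-edge-per-vertex principle together with the $2$-connectivity-based no-rogue-copy argument to close it.
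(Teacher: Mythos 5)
Your construction is exactly the paper's: append a $(P_3,H)$-enforcer to every vertex of a copy of $H$ except three distinguished vertices, so that only the edges among $\{u_1,u_2,u_3\}$ can be red, then use the fact that the copy of $H$ cannot be entirely blue to force a red edge there. The one detail you leave open is resolved in the paper in a single line by choosing $u_1,u_2,u_3$ so that $(u_1,u_2),(u_2,u_3)\in E(H)$ and taking $a=u_1$, $b=u_3$: if $a$ is free then $(u_1,u_2)$ and $(u_1,u_3)$ (if present) are blue, so $(u_2,u_3)$ must be red and $b$ is nonfree---precisely the case analysis you flag, so the two approaches coincide.
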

\begin{proof} 
Let $n = |V(H)|$.
    Construct a graph $G$ like so. Take a copy of $H$, and let $u_1, u_2, \ldots, u_{n} \in U$ be the vertices of $H$ in $G$, such that
    $(u_1,u_2)$ and $(u_2,u_3)$ are edges of $H$.
    For each $u_i$ for $i \in \{4, 5, \ldots, n \}$, append an enforcer to $u_i$.
    Observe that no ``new'' $H$ is constructed during this process since $H$ is $2$-connected.
    Since each vertex except $u_1, u_2, $ and $u_3$ is connected to an enforcer, each edge in $G[U]$, except $(u_1, u_2)$, $(u_2, u_3)$, and $(u_1, u_3)$ must be blue. However, not all of them can be blue, otherwise $G[U]$ is a blue $H$. 
    Therefore, in any good coloring, 
    if $u_1$ is a free vertex, $(u_2, u_3)$ must be red, making $u_3$ a nonfree vertex. Thus, $u_1$ is our in-vertex, and $u_3$ is our out-vertex.
    We illustrate $G$ in Figure~\ref{fig:signal-extender}(a).
\end{proof}

Observe that multiple copies of these extenders can be used to form larger ones (see Figure~\ref{fig:signal-extender}).
With the enforcer and extender graphs defined, 
we are ready to construct the gadgets for our
reductions. Below, we define variable and clause gadgets and describe how they are used in our proofs.
Recall that we are reducing from $(2,2)$-3SAT. We will explain how our graphs encode clauses and variables after the definitions.

\begin{definition}
For a fixed $H$,
a \textbf{clause gadget} is a $(P_3,H)$-good graph $CG$ containing vertices $i_1,i_2,$ and $i_3$---referred to as \textbf{input vertices}, such that if vertices outside the gadget, $o_1$, $o_2$ and $o_3$, are connected to $i_1$, $i_2$, and $i_3$, respectively, then each of the eight possible combinations of $(o_j, i_j)$'s colors should allow a $(P_3, H)$-good coloring for $CG$, except the coloring where all $(o_j, i_j)$'s are red, which should not allow a good coloring of $CG$.
\end{definition}

\begin{definition}
For a fixed $H$,
a \textbf{variable gadget} is a $(P_3, H)$-good graph $VG$ containing four \textbf{output vertices}: two \textbf{unnegated output 
vertices}, $u_1$ and $u_2$, and two \textbf{negated output 
vertices},
$n_1$ and $n_2$, such that:
\begin{enumerate}
    \item In each $(P_3,H)$-good coloring of $VG$:
        \begin{enumerate}
            \item If $u_1$ or $u_2$ is a free vertex, 
            then $n_1$ and $n_2$ must be nonfree vertices.
            \item If $n_1$ or $n_2$ is a free vertex, 
            then $u_1$ and $u_2$ must be nonfree vertices.
        \end{enumerate}
    \item There exists at least one $(P_3,H)$-good coloring of $VG$ where $u_1$ and $u_2$ are free vertices.
    \item There exists at least one $(P_3,H)$-good coloring of $VG$ where $n_1$ and $n_2$ are free vertices.
\end{enumerate}
\end{definition}

Note how
clause gadgets and their input vertices correspond to OR gates and their 
inputs; 
the external edges, denoted as $(o_j,i_j)$'s in the definition, behave like true or false signals:
blue is true, and red is false.
Similarly, output vertices of variable gadgets behave like
sources for these signals.
The reduction is now straightforward: for a formula $\phi$, construct $G_\phi$ like so.
For each variable and clause, add a corresponding gadget. Then, identify the output and input vertices according to how they appear in each clause. 
It is easy to see that 
any satisfying assignment of $\phi$
corresponds to a $(P_3, H)$-good coloring of $G_\phi$, and vice versa.
To complete the proof
we must show that the gadgets described exist and that no new $H$ is formed while combining the gadgets during the reduction.

Note that it is possible for a variable gadget to have a coloring where both unnegated and negated output vertices are nonfree. This does not affect the validity of the gadgets. A necessary restriction is that if variable $x$ appears unnegated in clause $L_1$ and negated in $L_2$, then $x$ cannot satisfy both clauses.
Our gadgets clearly impose that restriction.

\begin{figure}[t]
    \centering
    \includegraphics[width=0.95\textwidth]{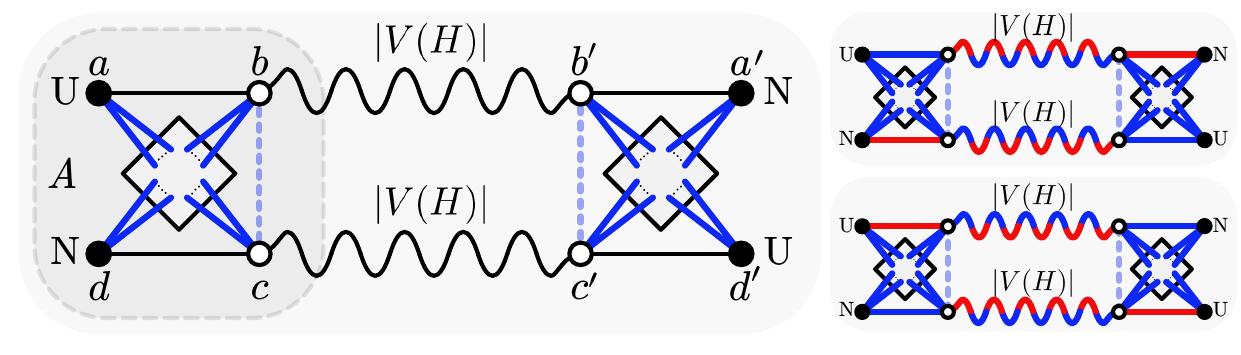}
    \caption{
    On the left, we show the variable gadget constructed using two copies of $A$ as described in Theorem~\ref{thm:p3-hard-1}'s proof and signal extenders. 
    The vertices in the square are the vertices of $A$ which had enforcers appended to them. The edge $(b,c)$ is dashed to signify that it may or may not exist.
    Edges have been precolored wherever possible. Note that if $(b,c)$ exists, it must be blue: if $(b,c)$ is red, the attached signal extenders will force 
    $(a',b')$, $(b',c')$, and $(c',d')$ to be blue, forming a blue $H$.
    By symmetry, $(b',c')$ must also be blue.
    Now, observe that at least one edge in $\{(a,b), (c,d)\}$ must be red, otherwise we form a blue $H$ in $A$. Suppose $(a,b)$ is red: the signal extender forces $(a',b')$ to be blue. To avoid a blue $H$, $(c',d')$ must be red, which forces $(c,d)$ to be blue. 
    In this case, the vertices marked $\mathbf{U}$ are nonfree vertices, and the vertices marked 
    $\mathbf{N}$ are free.
    A similar pattern can be observed when we color $(c,d)$ red instead, giving us colorings where vertices marked $\mathbf{U}$ are free and vertices marked $\mathbf{N}$ are nonfree. 
    }
    \label{fig:vg-1}
\end{figure}

\begin{figure}[t]
    \centering
    \includegraphics[width=0.95\textwidth]{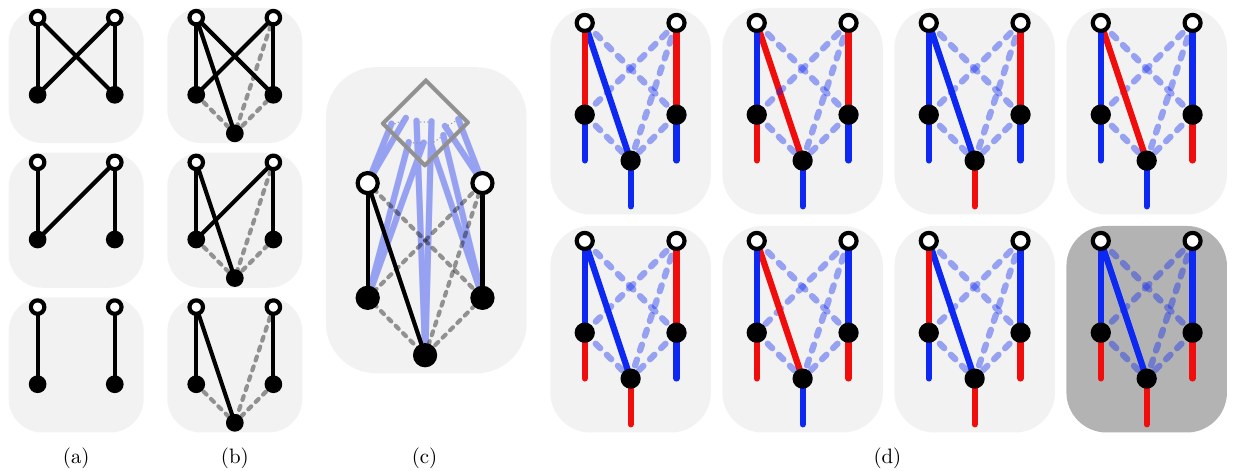}
    \caption{
    This figure shows the clause gadget used in the proofs of Theorem~\ref{thm:p3-hard-1} and~\ref{thm:p3-hard-2}.
    \textbf{(a)} Each block represents the induced subgraph in a $H$ when: \textit{(1)} $\mepl(H) = 2$ and $H$ has an induced $C_4$, \textit{(2)} $\mepl(H) = 1$, and \textit{(3)} $\mepl(H) = 0$.
    \textbf{(b)} Each block represents how a fifth vertex, denoted $e$ in the proofs of Theorems~\ref{thm:p3-hard-1} and~\ref{thm:p3-hard-2}, may be connected to the induced subgraphs from (a). 
    For each case, the solid line going from $e$ represents an edge that must exist in $E(H)$ since $d_H(v) \geq 2$ for all $v \in H$ due to $2$-connectivity. 
    In the first case, any edge can be chosen w.l.o.g.\ due to the symmetry of $C_4$.
    The dashed edges may or may not exist, but their existence is inconsequential to the correctness of our gadget. 
    \textbf{(c)} An illustration of the clause gadget, where each vertex of $H$ attached to an enforcer is in the square. The input vertices have been filled in.
    \textbf{(d)} 
     We show the eight possible combinations of inputs that can be given to the gadget. Observe that a $(P_3, H)$-good coloring is always possible unless the input is three red edges.
    }
    \label{fig:cg-1}
\end{figure}

\subsection{Hardness proofs}
\label{sec:reduc-proofs}

Using the ideas and gadgets presented in Section~\ref{sec:reduc-spesh}, we provide our hardness results below.

\begin{theorem}
\label{thm:p3-hard-1}
$(P_3, H)$-Arrowing is coNP-complete when $H$ is a $2$-connected graph on at least four vertices with $\mepl(H) \leq 1$.
\end{theorem}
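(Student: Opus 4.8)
The plan is to establish coNP-completeness by proving that the complement, $(P_3, H)$-Nonarrowing, is NP-hard via the reduction from $(2,2)$-3SAT sketched in Section~\ref{sec:reduc-spesh}. Since membership in coNP is immediate (a good coloring is a polynomial-time-checkable certificate), the entire burden is to exhibit the variable and clause gadgets promised by the respective definitions, and to argue that assembling them into $G_\phi$ introduces no rogue copy of $H$. The overall correctness---that satisfying assignments of $\phi$ biject with $(P_3,H)$-good colorings of $G_\phi$---then follows from the gadget specifications and the OR-gate/signal intuition already described, so the real work is purely the gadget construction for every $2$-connected $H$ on at least four vertices with $\mepl(H) \leq 1$.

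First I would construct the variable gadget as in Figure~\ref{fig:vg-1}: take two copies of the minimally-bad graph $A$ from Lemma~\ref{lem:p3enforce}, precolor edges wherever forced, and link corresponding enforced-vertex pairs with the signal extenders of Lemma~\ref{lem:sig-exst}. I would then verify the three required properties by the case analysis indicated in the figure caption---namely that at least one of $\{(a,b),(c,d)\}$ must be red to avoid a blue $H$ inside $A$, that a red choice on one side propagates through the extenders to force the opposite side, and that the optional edge $(b,c)$ is forced blue---which yields exactly the two complementary free/nonfree patterns on the output vertices $u_1,u_2,n_1,n_2$. Second, I would build the clause gadget of Figure~\ref{fig:cg-1}: the construction branches on the structural type of $H$ (whether $\mepl(H)=1$ or $\mepl(H)=0$, i.e.\ the cases in Figure~\ref{fig:cg-1}(a,b)), attaching enforcers to all but three chosen vertices so that the three input vertices $i_1,i_2,i_3$ behave as an OR gate. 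I would then check the eight input combinations of Figure~\ref{fig:cg-1}(d), confirming a $(P_3,H)$-good coloring exists in every case except all-red inputs.

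The main obstacle, as flagged throughout the paper, is ensuring that no \emph{rogue} copy of $H$ is created---both within each gadget and when identifying output with input vertices during assembly. For the $3$-connected intuition this is automatic, but here I must lean on the structural hypotheses. Where gadget construction identifies two copies of $H$ along an edge I would invoke Lemma~\ref{lem:mlep2-combine} (noting $\mepl(H)\le 1$ places us outside its $\mepl(H)\ge 2$ regime, so the low-linkage argument must instead be made directly from the local picture in Figure~\ref{fig:cg-1}(a,b)), and where graphs are joined through a single signal vertex I would reuse the $2$-connectivity argument from Lemma~\ref{lem:p3enforce}: any new copy of $H$ would be separated by a single cut vertex, contradicting $2$-connectivity. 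The delicate part is the clause gadget's fifth vertex $e$ and its forced edge (present because $d_H(v)\ge 2$), where the low $\mepl$ value is exactly what guarantees that the edges leaving $e$ cannot close up into an unintended $H$; this per-case linkage bookkeeping is where I expect the argument to be most technical.

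Finally, with both gadgets shown to exist and to be free of rogue copies, I would assemble $G_\phi$ and confirm the reduction is polynomial-time computable (each gadget has size bounded by a function of the fixed $H$, so $|G_\phi|$ is linear in $|\phi|$), completing the NP-hardness of $(P_3,H)$-Nonarrowing and hence the coNP-completeness of $(P_3,H)$-Arrowing for the stated class of $H$.
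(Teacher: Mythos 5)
Your proposal takes essentially the same route as the paper: reduce $(2,2)$-3SAT to $(P_3,H)$-Nonarrowing, build the variable gadget of Figure~\ref{fig:vg-1} and the clause gadget of Figure~\ref{fig:cg-1} from a copy of $H$ decorated with enforcers and linked by signal extenders, and rule out rogue copies of $H$ via $2$-connectivity together with making the extenders long enough that any spanning copy would need at least $2|V(H)|$ vertices. One small correction: the graph $A$ in this proof is a fresh copy of $H$ with enforcers appended to every vertex except $a,b,c,d$ (and the clause gadget leaves the five vertices $a,b,c,d,e$ enforcer-free, not three), whereas the ``minimally bad'' graph of Lemma~\ref{lem:p3enforce} is used only to construct the enforcers themselves.
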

\begin{proof}
    We reduce $(2, 2)$-3SAT to $(P_3, H)$-Nonarrowing as described in the end of Section~\ref{sec:reduc-spesh};
    given a $(2,2)$-3SAT formula $\phi$, we construct a graph $G_\phi$ such that $G_\phi$ is $(P_3,H)$-good if and only if $\phi$ is satisfiable.
    Since we have $\mepl(H) \leq 1$, we must have two edges $(a,b), (c,d) \in E(H)$ that have at most one edge adjacent to both. We construct a graph $A$ like so: take a copy of $H$ and append an enforcer to each vertex of $H$ except $a,b,c,$ and $d$. We construct the variable gadget using two copies of $A$ joined by signal extenders, as shown in Figure~\ref{fig:vg-1}. The vertices labeled $\mathbf{U}$ (resp., $\mathbf{N}$) correspond to unnegated (resp., negated) output vertices.
    Note that there are no rogue $H$'s made during this construction.
    Recall that because of $2$-connectivity, a copy of $H$ cannot go through a single vertex. Thus, if a copy of $H$ other than the ones in $A$ and the signal extenders exists, it must go through both copies of $A$ and the signal extenders. However, this is not possible because by including the two signal extenders, we would have a copy of $H$ with at least $2|V(H)|$ vertices.

    We now describe our clause gadget.
    As observed in Section~\ref{subsec:comb}, there is no $2$-connected graph with $\mepl(\cdot) \leq 1$ on four vertices so we assume $|V(H)| \geq 5$. 
    Let $(a,b)$ and $(c,d)$ be the edges that achieve this $\epl(\cdot)$. 
    Let $e$ be a fifth vertex connected to at least one of $\{a,b,c,d\}$.
    We construct a clause gadget by 
    taking a copy of $H$ and 
    appending an enforcer to each vertex except $a,b,c,d,$ and $e$.
    In Figure~\ref{fig:cg-1}---which also includes a special case for $\mepl(H) = 2$ used in Theorem~\ref{thm:p3-hard-2}---we show how $e$ and two vertices from $\{a,b,c,d\}$ can be used as input vertices so that a blue $H$ is formed if and only if all three input vertices are connected to red external edges. 
    Observe that we can make the clause gadget arbitrarily large by attaching the out-vertex of a signal extender to each input vertex of a clause gadget. 
    We attach a signal extender with at least $|V(H)|$ vertices to each input vertex. This ensures that no copies of $H$ other than the ones in each gadget and extender are present in $G_\phi$.
\end{proof}

\begin{figure}[t]
    \centering
    \includegraphics[width=0.95\textwidth]{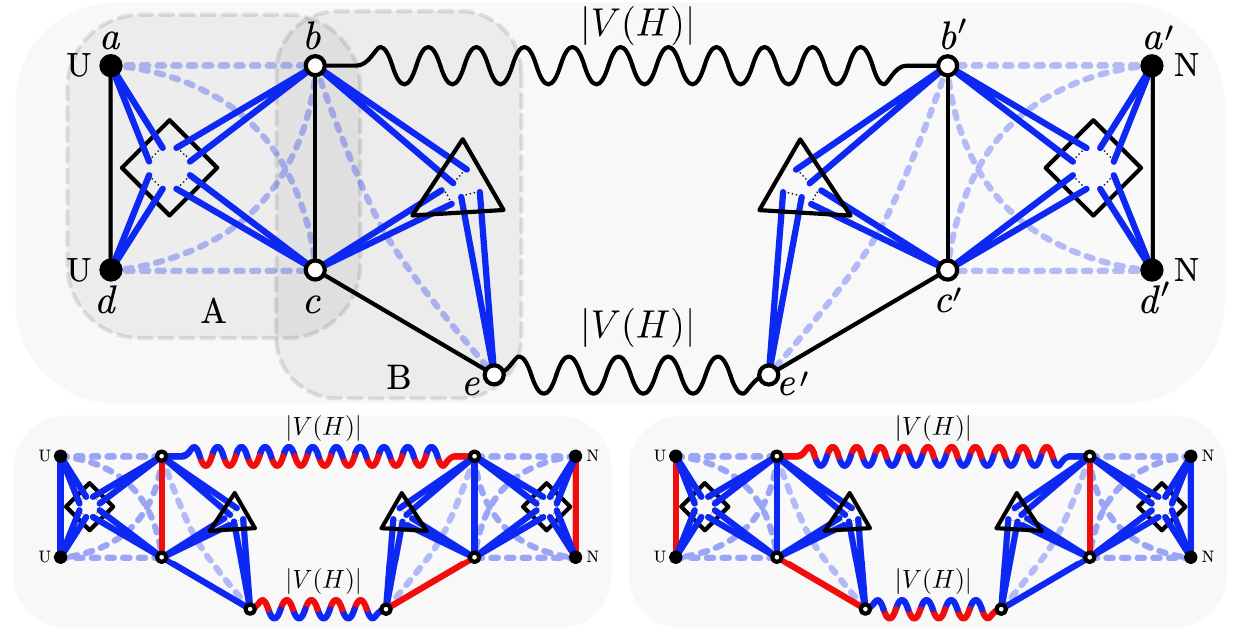}
    \caption{On the top, we show the variable gadget constructed using two copies of $F$ as described in Theorem~\ref{thm:p3-hard-1}'s proof and signal extenders. $A$ and $B$ have been marked.
    The vertices in the square (resp., triangle) are the vertices of $A$ (resp., $B$) which had enforcers appended to them. 
    Dashed edges signify edges that may or may not exist. 
    Edges have been precolored wherever possible.
    Observe that $(b,e)$ must be blue: if $(b,e)$ is red, the attached extenders will force $(b',c')$, $(b',e')$, and $(c',e')$ to be blue, forming a blue $H$ in the copy of $B$ on the right.
    We show that $(a,b)$, $(a,c)$, $(b,d)$, and $(c,d)$ must always be blue.
    Observe that at least one edge in 
    $\{(b,c), (c,e)\}$ must be red, otherwise we form a blue $H$ in $B$. 
    Note that if $(b,c)$ is red, the edges $(a,b)$, $(a,c)$, $(b,d)$, and $(c,d)$ must be blue. 
    If $(b,c)$ is blue, $(c,e)$ is red. Thus, $(c,d)$ and $(a,c)$ are blue.
    Moreover, a red $(c,e)$ forces 
    $(c',e')$ to be blue via the extender. Thus, 
    $(b',c')$ must be red to avoid a blue $H$. The extender on the top will in turn force edge $(a,b)$ and $(b,d)$ to be blue. 
    Therefore, $(a,b)$, $(a,c)$, $(b,d)$, and $(c,d)$ are blue in all good colorings.
    By symmetry, 
    $(a',b')$, $(a',c')$, $(b',d')$, $(c',d')$, and
    $(b',e')$ must also be blue in all good colorings.
    Observe that when a vertex marked $\mathbf{U}$ is nonfree, i.e., $(a,d)$ is blue, $(b,c)$ must be red. Thus, $(b',c')$ is blue, and $(a',d')$ must be red, making the vertices marked $\mathbf{N}$ nonfree. 
    A similar pattern can be observed when vertices marked $\mathbf{N}$ are free, wherein the vertices marked $\mathbf{U}$ are forced to be nonfree. These colorings are shown at the bottom of the figure.
    }
    \label{fig:vg-2}
\end{figure}
\begin{figure}[t]
    \centering
    \includegraphics[width=0.95\textwidth]{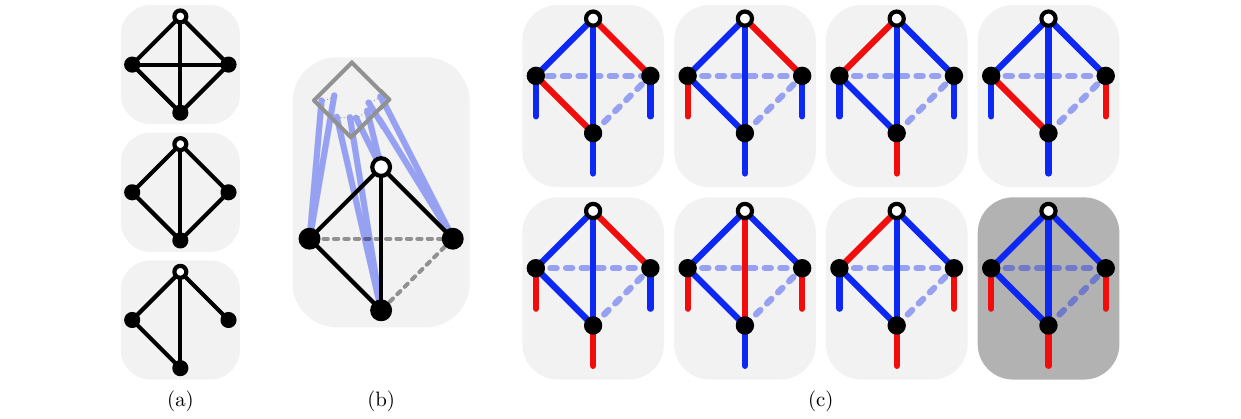}
    \caption{
    This figure shows the clause gadget used in Theorem~\ref{thm:p3-hard-2} when $H$ contains a $TK_3$.
    \textbf{(a)} Each block represents the induced subgraph in a $H$ when: \textit{(1)} $\mepl(H) = 4$, \textit{(2)} $\mepl(H) = 3$, and \textit{(3)} $\mepl(H)=2$ and $H$ has an induced $TK_3$.
    \textbf{(b)} An illustration of the clause gadget, where each vertex of $H$ attached to an enforcer is in the square. The input vertices have been filled in.
    Dashed edges may or may not exist, but their existence is inconsequential to the correctness of our gadget. 
    \textbf{(c)} 
     We show the eight possible combinations of inputs that can be given to the gadget. Observe that a $(P_3, H)$-good coloring is always possible unless the input is three red edges.
    }
    \label{fig:cg-2}
\end{figure}

\begin{theorem}
\label{thm:p3-hard-2}
$(P_3, H)$-Arrowing is coNP-complete when $H$ is a $2$-connected graph on at least four vertices with $\mepl(H) \geq 2$.
\end{theorem}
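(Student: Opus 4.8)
The plan is to reuse the reduction template of Theorem~\ref{thm:p3-hard-1}: I would reduce $(2,2)$-3SAT to $(P_3,H)$-Nonarrowing by assembling a graph $G_\phi$ from variable and clause gadgets, with a blue external edge $(o_j,i_j)$ encoding \emph{true} and a red one encoding \emph{false}. Two things must then be verified: that the gadgets exist and have the specified forcing behaviour, and that assembling them introduces no rogue copy of $H$. The essential new difficulty compared with the $\mepl(H)\le 1$ regime is that copies of $H$ can no longer be identified freely; this is exactly the situation Lemma~\ref{lem:mlep2-combine} addresses, and I would invoke it every time two copies of $H$ are glued along an edge.

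First I would fix a minimizing pair $e=(a,b)$, $f=(c,d)$ with $\epl(e,f)=\mepl(H)\in\{2,3,4\}$ and read off $H[\{a,b,c,d\}]$ from the cross edges. If $\epl=4$ then all four cross edges are present and $H[\{a,b,c,d\}]=K_4$; if $\epl=3$ one cross edge is missing and we obtain $J_4$; and if $\epl=2$ the two cross edges are either vertex-disjoint, yielding an induced $C_4$, or share a vertex, yielding an induced paw. This four-way split is exhaustive for $\mepl(H)\ge 2$ and matches the blocks drawn in Figures~\ref{fig:cg-1} and~\ref{fig:cg-2}. Since $K_4$, $J_4$, and the paw each contain a $TK_3$ as a subgraph, the four cases collapse into two: the induced-$C_4$ case, handled by the clause gadget of Figure~\ref{fig:cg-1} (block~(1)) together with a variable gadget analogous to Figure~\ref{fig:vg-1}, and the $TK_3$ case, for which I would build the gadgets of Figures~\ref{fig:vg-2} and~\ref{fig:cg-2}.

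For the $TK_3$ case the clause gadget is built exactly as before: take a copy of $H$ and append a $(P_3,H)$-enforcer (Lemma~\ref{lem:p3enforce}) to every vertex outside the chosen $TK_3$, so that all edges except the local triangle-plus-tail are forced blue, and then designate input vertices so that a blue $H$ is forced iff all three inputs are red---checkable directly over the eight colorings in Figure~\ref{fig:cg-2}(c). The variable gadget is the delicate part: to flip a signal it is assembled from copies of $H$ glued along a safe edge and wired together with signal extenders (Lemma~\ref{lem:sig-exst}), and its correctness is the edge-by-edge forcing argument sketched in the caption of Figure~\ref{fig:vg-2}. Lemma~\ref{lem:mlep2-combine} ensures the gluing creates no rogue $H$, and padding each signal extender to at least $|V(H)|$ vertices ensures that any copy of $H$ spanning two gadgets would need $\ge 2|V(H)|$ vertices, which is impossible.

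The main obstacle is $J_4$, the single graph excluded by Lemma~\ref{lem:mlep2-combine}: identifying two copies of $J_4$ along any edge does produce a rogue $J_4$, so the generic variable gadget breaks for it. I expect to dispatch $J_4$ separately---either via a bespoke variable gadget whose lack of rogue copies is verified by hand (the graph is fixed and tiny) or by locating the $TK_3$ inside $J_4$ so that the only gluing needed is along an edge that is safe for this specific graph. The remaining work---confirming exhaustiveness of the case split and verifying the forcing behaviour of the clause and variable gadgets for each of the $K_4$, $J_4$, paw, and $C_4$ blocks---is routine but must be carried out block by block, as in Theorem~\ref{thm:p3-hard-1}.
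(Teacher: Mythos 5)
Your overall architecture coincides with the paper's: the same $(2,2)$-3SAT reduction template, the same reliance on Lemma~\ref{lem:mlep2-combine} to glue two copies of $H$ along an edge for the variable gadget, the same split of the clause gadget according to whether $H$ contains a $TK_3$ or only an induced $C_4$, and the same treatment of $J_4$ as the lone exception requiring a bespoke variable gadget. Your derivation of the $K_4$/$J_4$/paw/$C_4$ blocks from a minimizing edge pair is exactly the structural observation underlying Figures~\ref{fig:cg-1} and~\ref{fig:cg-2}, and your rogue-copy accounting (2-connectivity plus extenders of length at least $|V(H)|$) matches the paper.

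Two concrete points diverge and, as stated, would fail. First, for the induced-$C_4$ case you propose ``a variable gadget analogous to Figure~\ref{fig:vg-1},'' but that construction is predicated on $\epl((a,b),(c,d))\le 1$: its correctness requires every edge of $H[\{a,b,c,d\}]$ other than $(a,b)$ and $(c,d)$ to be forced blue, and with at most one cross edge the extenders can do this. With an induced $C_4$ there are two cross edges, $(b,c)$ and $(a,d)$, both of which must be forced blue before the ``at least one of $(a,b),(c,d)$ is red'' argument goes through; the analogy is not automatic and you do not supply the extra forcing. The paper sidesteps this entirely by using the glued-two-copies gadget of Figure~\ref{fig:vg-2} \emph{uniformly} for all $\mepl(H)\ge 2$ except $J_4$ --- the $C_4$-versus-$TK_3$ distinction affects only the clause gadget. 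Second, your plan commits the induced-$C_4$ case to the clause gadget of Figure~\ref{fig:cg-1}(1), which requires a fifth vertex $e\in V(H)$ and hence $|V(H)|\ge 5$; for $H=C_4$ itself no such gadget exists, and the paper must (and does) supply a separate clause gadget for $C_4$ in Appendix~\ref{app:a}. Listing $C_4$ among the blocks to ``verify routinely'' does not cover this, since the generic construction is unavailable there. A minor additional note: your fallback for $J_4$ of ``locating an edge that is safe for this specific graph'' cannot work --- Figure~\ref{fig:mepl-n4} shows both nonisomorphic edge choices of $J_4$ produce a rogue copy --- so only your first option (a hand-verified bespoke gadget, as in Figure~\ref{fig:p3j4}) is viable.
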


\begin{proof}
    We follow the same argument as in the proof of Theorem~\ref{thm:p3-hard-1}.
    We first discuss the variable gadget.
    Using Lemma~\ref{lem:mlep2-combine} and the fact that 
    $\mepl(H) \geq 2$, we know that we can construct a graph $F$ with exactly two copies of $H$ that share a single edge, unless $H= J_4$.
    The variable gadget for this exception is shown in Appendix~\ref{app:a}. For now, we assume $H \not= J_4$. 
    Let $A$ and $B$ the copies of $H$ in $F$ and 
    let $(b,c)$ be the edge that was identified. Let $e \not= b$ be a vertex in $B$ adjacent to $c$, and let $(a,d)$ be an edge in $A$ where $a,d \not\in \{b,c\}$.
    Note that if such an $(a,d)$ does not exist in $H$, then $(b,c)$ must be an edge that shares a vertex with every other edge in $H$. However, it is easy to see that in this case two copies of $H$ cannot be identified on $(b,c)$ without forming new copies of $H$. Thus, $(a,d)$ must exist.
    We now append enforcers to each vertex in $A$ and $B$ except $a,b,c,d,$ and $e$.
    Our variable gadget is constructed using two copies of $F$ joined via signal extenders as shown in Figure~\ref{fig:vg-2}.

    We use a clause gadget similar to the one used in Theorem~\ref{thm:p3-hard-1}, but with some modifications. 
    Since $\mepl(H) \geq 2$, we know that $H$ must contain a $C_4$ or a $TK_3$. If $H$ contains a $TK_3$, we can use the gadget shown in Figure~\ref{fig:cg-2}.
    However, if $\mepl(H) = 2$ and we only have induced $C_4$'s, we can use the gadget shown in Figure~\ref{fig:cg-1} when $|V(H)| \geq 5$ using the induced $C_4$ and another vertex $e$. 
    The only case left is $H = C_4$, which is discussed in Appendix~\ref{app:a}. 
\end{proof}

\section{Extending the hardness of \texorpdfstring{$\mathbf{(P_3, H)}$-Arrowing}{(P3, H)-Arrowing}}
\label{sec:extend}

In this section, we discuss how hardness results for $(P_3, H)$-Arrowing can be extended to other $(F,H)$-Arrowing problems. 
We believe this provides an easier method for proving hardness compared to constructing SAT gadgets.
We discuss two methods in which our results can be extended: \textit{(1)} showing that 
$G \ra (P_3, H) \iff G \ra (P_3, H')$ 
for some pairs of $H$ and $H'$ (Section~\ref{sec:tailedkn}), and \textit{(2)} given a graph $G$, showing how to construct a graph $G'$ such that $G \ra (P_3, H) \iff G' \ra (F, H)$ for some $F$
(Section~\ref{sec:stars}). 

\subsection{\texorpdfstring{$\mathbf{P_3}$}{P3} versus tailed complete graphs}
\label{sec:tailedkn}

We first observe that edges not belonging to $H$ can be removed while working on $(P_3, H)$-Arrowing for a graph $G$; we can always color said edge blue without forming a blue $H$.

\begin{observation}
\label{obs:g-minus-e}
    Let $G$ be a graph and $e \in E(G)$. If $e$ does not belong to a copy of $H$ in $G$, then $G \ra (P_3, H)$ if an only if $G - e \ra (P_3, H)$.
\end{observation}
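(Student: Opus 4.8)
The plan is to prove both directions of the biconditional through the complementary notion of a $(P_3,H)$-good coloring, since $G \ra (P_3,H)$ holds precisely when $G$ admits \emph{no} such coloring. Thus it suffices to show that $G$ is $(P_3,H)$-good if and only if $G - e$ is $(P_3,H)$-good, and then pass to contrapositives to recover the statement about arrowing.

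First I would handle the easy direction: if $G$ is $(P_3,H)$-good, then so is $G - e$. Given a $(P_3,H)$-good coloring $c$ of $G$, simply restrict $c$ to $E(G) - \{e\}$. Deleting an edge only removes edges from the red and blue subgraphs, so it can create neither a new red $P_3$ nor a new blue $H$; hence the restriction is a good coloring of $G - e$. Note that this direction does not use the hypothesis on $e$ at all.

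The second direction is where the hypothesis is essential: if $G - e$ is $(P_3,H)$-good, then so is $G$. I would take a good coloring $c'$ of $G - e$ and extend it to $G$ by coloring $e$ blue. The red subgraph is unchanged, so no red $P_3$ is introduced. Any blue $H$ present in the extended coloring but not in $c'$ must use the newly blue edge $e$, i.e.\ must be a copy of $H$ in $G$ containing $e$; by assumption no copy of $H$ contains $e$, so no new blue $H$ appears. Therefore the extended coloring is good for $G$.

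The only subtlety to watch is keeping the two directions logically aligned with the arrowing formulation: since $G \ra (P_3,H)$ means the absence of a good coloring, each implication between good-colorability statements flips to its contrapositive for arrowing, so the chain ``$G$ good $\Rightarrow$ $G-e$ good'' yields ``$G-e \ra (P_3,H) \Rightarrow G \ra (P_3,H)$'' and vice versa. I would also emphasize explicitly that only the edge-addition direction relies on $e$ lying in no copy of $H$, while the edge-deletion direction holds unconditionally. There is no genuine obstacle here; the work is entirely in stating the two color-extension/restriction arguments cleanly.
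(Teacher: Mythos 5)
Your proposal is correct and matches the paper's reasoning: the paper justifies this observation with the single remark that the edge can always be colored blue without creating a blue $H$, which is exactly your extension argument, and your restriction argument for the converse direction is the standard (unstated) half. Nothing differs in substance; you have merely written out the details the paper leaves implicit.
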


\begin{theorem}
For $n \geq 3$, $G \ra (P_3, TK_n)$ if and only if $G \ra (P_3, K_n)$.
\end{theorem}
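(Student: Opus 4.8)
The plan is to prove the biconditional through its complement, working in the matching-removal language set up earlier: a $(P_3, H)$-good coloring of $G$ is exactly a matching $M \subseteq E(G)$ (the red edges) for which $G - M$ (the blue subgraph) is $H$-free. It therefore suffices to show that $G$ admits a $(P_3, TK_n)$-good coloring if and only if it admits a $(P_3, K_n)$-good coloring. One direction is immediate: since $K_n$ is a subgraph of $TK_n$, every $K_n$-free graph is $TK_n$-free, so any $(P_3, K_n)$-good coloring is already $(P_3, TK_n)$-good; taking contrapositives gives $G \ra (P_3, TK_n) \Rightarrow G \ra (P_3, K_n)$. The content is in the converse, which I would prove by starting from a matching $M$ with $G - M$ being $TK_n$-free and manufacturing a matching $M'$ with $G - M'$ being $K_n$-free.

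The key structural observation is that any $K_n$ surviving in $G - M$ is ``stuck.'' If $S$ induces a $K_n$ in $G - M$, then no edge of $G - M$ can leave $S$: such an edge together with the clique on $S$ would be a blue $TK_n$. Hence every edge of $G$ from $S$ to $V(G) - S$ lies in $M$, and since $M$ is a matching each vertex of $S$ is incident to at most one edge of $M$, so \emph{each vertex of $S$ has at most one neighbor outside $S$ in all of $G$}. I would isolate two consequences as claims. First, distinct surviving $K_n$'s are vertex-disjoint: if cliques $S \neq S'$ shared a vertex $v$, then picking $u \in S \setminus S'$ gives an edge $(v,u) \in E(G-M)$ that is internal to $S$ (so not in $M$) yet external to $S'$ (so in $M$), a contradiction. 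Second, if $x \in S$ has outside neighbor $w$ (matched to it by $M$), then $w$ is adjacent to no other vertex of $S$, since any edge $(w,t)$ with $t \in S \setminus \{x\}$ would be a second $M$-edge at $w$.

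Using disjointness, I would repair each surviving clique independently: pick two vertices $x, y \in S$, delete from $M$ their (at most one each) outside edges, and insert the internal edge $(x,y)$. Performing all these repairs simultaneously yields a set $M'$; because the repairs touch pairwise-disjoint vertex sets, $M'$ is again a matching, and deleting $(x,y)$ from $G - M'$ destroys the clique on $S$. It remains to check that no new $K_n$ is created in $G - M'$.

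This verification is the main obstacle, and it is precisely where $n \geq 3$ enters. A new clique must use some re-added edge $(x,w)$ with $x \in S$, $w \notin S$, and hence contains both $x$ and $w$; since $x$'s only neighbor outside $S$ is $w$, the remaining $n - 2 \geq 1$ clique vertices must lie in $S$, but $w$ has no neighbor in $S$ other than $x$, so $w$ cannot be adjacent to them and no $K_n$ forms. Any clique avoiding all re-added edges would already be a $K_n$ of $G - M$, i.e.\ one of the original sets $S$, which the repair has just broken. Thus $G - M'$ is $K_n$-free, giving a $(P_3,K_n)$-good coloring. Together with the easy direction, this establishes $G \ra (P_3, TK_n) \iff G \ra (P_3, K_n)$.
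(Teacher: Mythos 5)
Your proof is correct, but it takes a genuinely different route from the paper's. The paper proves the nontrivial direction by contradiction: it assumes $G \ra (P_3, K_n)$ yet $(P_3, TK_n)$-good, first strips away (via Observation~\ref{obs:g-minus-e}) every edge not lying in a $K_n$, and then shows that a blue $K_n$ in the good coloring together with a red edge $(u_1,v)$ leaving it forces a blue $TK_n$ through a common neighbor $w$ of $u_1$ and $v$ (which must exist because $(u_1,v)$ lies in a $K_n$, and whose two incident edges must be blue to avoid a red $P_3$). You instead argue constructively in the matching-removal picture: you show that every blue $K_n$ surviving a $TK_n$-good coloring is sealed off by red boundary edges, that these cliques are pairwise vertex-disjoint and attached to the rest of $G$ by a matching, and you then rewire the red matching locally around each clique to destroy it without creating new ones. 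Both arguments hinge on the same core fact---all edges leaving a blue $K_n$ must be red---but yours needs no preprocessing of $G$ and yields an explicit polynomial-time conversion of a $(P_3,TK_n)$-good coloring into a $(P_3,K_n)$-good one, at the cost of the longer case analysis verifying that no new $K_n$ appears. One small imprecision: the repairs do not literally ``touch pairwise-disjoint vertex sets,'' since the outside matched partners $w_x, w_y$ of one clique may coincide with vertices (even chosen vertices) of another clique; the conclusion that $M'$ is a matching nonetheless holds because only chosen vertices acquire a new edge, each is chosen in a unique clique by your disjointness claim, and its old $M$-edge is simultaneously removed---but this deserves the extra sentence rather than the appeal to disjointness as stated.
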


\begin{proof}
    Clearly if $G \ra (P_3, TK_n)$ then $G \ra (P_3, K_n)$ since $K_n$ is a subgraph of $TK_n$. For the other direction,
    consider a graph $G$ such that $G \ra (P_3, K_n)$ but is $(P_3, TK_n)$-good. 
    By Observation~\ref{obs:g-minus-e}, we can assume that each edge in $G$ belongs to a $K_n$. We can also assume that $G$ is connected.
    Let $c$ be a $(P_3, TK_n)$-good coloring of $G$. 
    Since $G \ra (P_3, K_n)$ there must exist a blue $K_n$ in $c$. 
    Let $U = \{u_1, u_2, \ldots, u_n\}$ be the vertices of said $K_n$.
    Let $e = (u_i, v)$
    be an edge going from some $u_i \in U$ to a vertex $v \not\in U$. We know that such an edge exists otherwise $G$ is just a $K_n$, and a $K_n$ is $(P_3, K_n)$-good, so this would contradict our assumption. W.l.o.g., let $u_i = u_1$.
    Note that $(u_1, v)$ must be red, otherwise, we have a blue $TK_n$. Since $(u_1, v)$ is part of a $K_n$ (Observation~\ref{obs:g-minus-e}), at least one vertex $w$ must be connected to both $u_1$ and $v$. 
    Note that $(u_1, w)$ and $(v,w)$ must be blue to avoid a red $P_3$. 
    If $w \in U$, then $U$ and $(v,w)$ form a blue $TK_n$. If $w \in V(G) - U$, then $U$ and $(u_1, w)$ form a blue $TK_n$.
\end{proof}

\begin{corollary}
$(P_3, TK_n)$-Arrowing is coNP-complete when $n \geq 4$ and in P when $n=3$.
\end{corollary}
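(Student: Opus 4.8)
The plan is to obtain the corollary as a direct consequence of the equivalence just established together with the classification in Theorem~\ref{thm:main}. The key observation is that the preceding theorem asserts, for \emph{every} input graph $G$ and every $n \geq 3$, that $G \ra (P_3, TK_n) \iff G \ra (P_3, K_n)$. This equivalence holds on the same input graph, so the identity map is a polynomial-time many-one reduction in both directions between $(P_3, TK_n)$-Arrowing and $(P_3, K_n)$-Arrowing. Consequently the two problems have exactly the same complexity, and it suffices to settle the complexity of $(P_3, K_n)$-Arrowing and transport the answer across this equivalence.

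For the hard case, $n \geq 4$, I would first check that $K_n$ satisfies the hypotheses of the classification theorem: $K_n$ is $2$-connected (indeed $(n-1)$-connected) for $n \geq 4$, and $K_n \neq K_3$. Hence Theorem~\ref{thm:main} applies and gives that $(P_3, K_n)$-Arrowing is coNP-complete. Transporting coNP-hardness through the equivalence $G \ra (P_3, TK_n) \iff G \ra (P_3, K_n)$ yields coNP-hardness of $(P_3, TK_n)$-Arrowing, while coNP membership is immediate from the general observation, noted in the introduction, that $(F,H)$-Arrowing lies in coNP for all fixed $F$ and $H$ (a good coloring is a polynomially verifiable certificate for the complement). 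Together these give coNP-completeness for $n \geq 4$.

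For the easy case, $n = 3$, I would instantiate the equivalence at $n = 3$ to reduce $(P_3, TK_3)$-Arrowing to $(P_3, K_3)$-Arrowing, the latter being in P by Theorem~\ref{thm:p3k3-p}. Since the reduction is the identity map and runs in polynomial time, membership in P transfers to $(P_3, TK_3)$-Arrowing.

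I do not anticipate a genuine obstacle here, since all the substantive work resides in the preceding theorem and in Theorem~\ref{thm:main}; the corollary is essentially a bookkeeping step. The only points I would take care to verify explicitly are that $K_n$ meets the two-connectivity and $H \neq K_3$ conditions required to invoke Theorem~\ref{thm:main} for $n \geq 4$, and that the equivalence in the preceding theorem is stated on a common input graph, so that it transports hardness and tractability in \emph{both} directions rather than only one. Once these are checked, the coNP-completeness for $n \geq 4$ and the P membership for $n = 3$ follow at once.
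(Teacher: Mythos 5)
Your proposal is correct and matches the paper's (implicit) argument exactly: the corollary is an immediate consequence of the identity-map equivalence $G \ra (P_3, TK_n) \iff G \ra (P_3, K_n)$ combined with Theorem~\ref{thm:main} (for $n \geq 4$, since $K_n$ is $2$-connected and distinct from $K_3$) and Theorem~\ref{thm:p3k3-p} (for $n = 3$). No gaps.
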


With this result, we have categorized the complexity of all $(P_3, H)$-Arrowing problems for connected $H$ with $|V(H)| \leq 4$; the star and path graphs were shown to be in P~\cite{burr1976graphs,hassan2023}. 

\subsection{Stars versus 
\texorpdfstring{$\mathbf{2}$-connected}{2-connected}
graphs}
\label{sec:stars}

Note that $P_3 = K_{1,2}$. 
Given a graph $G$, suppose we construct a graph $G'$ by taking a copy of $G$ and appending an edge (one for each vertex in $G$) to each vertex, where each appended edge is forced to be red in all colorings.
It is easy to see that if a coloring of $G$ contains a red $K_{1,2}$, then said coloring in $G'$ contains a red $K_{1,3}$, using the appended red edge.
Thus, if we can find a $(K_{1,3}, H)$-good graph $F$ with an edge $(u,v)$ such that,
in all good colorings, $(u,v)$ is red 
and no other edge adjacent to $v$ is red, we could reduce $(K_{1,2}, H)$-Arrowing to $(K_{1,3}, H)$-Arrowing by appending a copy of $F$ (identifying $v$) to each vertex of $G$. 
Recall that we do not have to worry about new copies of $H$ due to its $2$-connectivity.

Generalizing this argument, if we attach $k$ red edges to each vertex of $G$, then a coloring of $G$ with a red $K_{1,\ell}$ corresponds to a coloring of $G'$ with a red $K_{1,k+\ell}$. 
In Appendix~\ref{app:b}, we show that for all $n \geq 3$, there exists some $m$ for which there is a 
$(K_{1,n}, H)$-good graph with a vertex $v$ that is always the center of a $K_{1,m}$ for some $1 \leq m < n$. This allows us to reduce $(K_{1,n-m}, H)$-Arrowing to $(K_{1,n}, H)$-Arrowing. Thus, we can assert the following:

\begin{restatable}[]{lemma}{starhlemma}
\label{lem:k1nh-extend}
Suppose $H$ is a $2$-connected graph.
If $(K_{1,k}, H)$-Arrowing is coNP-hard for all $2 \leq k < n$, then  $(K_{1,n}, H)$-Arrowing is also coNP-hard.
\end{restatable}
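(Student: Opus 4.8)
The plan is to implement the reduction sketched in the paragraph immediately preceding the lemma, turning the informal ``attach $m$ red edges'' idea into a clean inductive reduction from $(K_{1,n-m}, H)$-Arrowing to $(K_{1,n}, H)$-Arrowing. First I would make precise the claim deferred to Appendix~\ref{app:b}: for the given $n \geq 3$ there exists an integer $m$ with $1 \leq m < n$ and a $(K_{1,n}, H)$-good gadget graph $F$ containing a special vertex $v$ such that, in every $(K_{1,n}, H)$-good coloring of $F$, $v$ is the center of a red $K_{1,m}$ (i.e.\ exactly $m$ red edges meet $v$ inside $F$, and no good coloring can avoid this). I would treat the existence of $F$ as given by the appendix construction and use $2$-connectivity of $H$ exactly as in Lemmas~\ref{lem:p3enforce} and~\ref{lem:sig-exst}: because $H$ cannot be disconnected by deleting a single vertex, identifying $v$ of $F$ with a vertex of $G$ creates no rogue copies of $H$ that span both pieces, so blue copies of $H$ in the combined graph live entirely inside $G$ or entirely inside a copy of $F$.

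The reduction itself is as follows. Given an instance $G$ of $(K_{1,n-m}, H)$-Arrowing, build $G'$ by appending a fresh disjoint copy of $F$ to each vertex $x \in V(G)$, identifying the special vertex $v$ of that copy with $x$. I would then argue the equivalence $G \ra (K_{1,n-m}, H) \iff G' \ra (K_{1,n}, H)$ by contrapositive, i.e.\ by showing $G$ is $(K_{1,n-m}, H)$-good iff $G'$ is $(K_{1,n}, H)$-good. For the forward direction, take a $(K_{1,n-m}, H)$-good coloring of $G$; extend it to $G'$ by coloring each appended copy of $F$ with one of its guaranteed $(K_{1,n}, H)$-good colorings. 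Each original vertex $x$ now has at most $n-m-1$ red edges from inside $G$ together with exactly $m$ red edges from its copy of $F$, giving at most $n-1$ red edges at $x$, so no red $K_{1,n}$ is centred at $x$; interior vertices of the $F$ copies are fine since $F$ was chosen $(K_{1,n}, H)$-good; and no blue $H$ appears because blue copies are confined to $G$ (already $H$-free in blue) or to a single $F$ copy (likewise). For the reverse direction, restrict a $(K_{1,n}, H)$-good coloring of $G'$ to $E(G)$: since in that coloring each appended copy of $F$ is itself $(K_{1,n}, H)$-good and hence forces exactly $m$ red edges at the shared vertex $x$, any vertex $x$ of $G$ that had $n-m$ red $G$-edges would combine with these $m$ to yield a red $K_{1,n}$ at $x$, a contradiction; so the restriction has no red $K_{1,n-m}$, and it is blue-$H$-free because $G$'s blue subgraph is a subgraph of $G'$'s. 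Finally, the lemma's hypothesis that $(K_{1,k}, H)$-Arrowing is coNP-hard for all $2 \leq k < n$ supplies coNP-hardness of the source problem: since $1 \leq m < n$ we have $2 \leq n-m \le n-1$, so $n-m$ lies in the stated range and $(K_{1,n-m}, H)$-Arrowing is coNP-hard, which transfers to $(K_{1,n}, H)$-Arrowing through the polynomial-time reduction just described.

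I expect the main obstacle to be establishing the gadget $F$ with the precise ``always exactly $m$ red edges at $v$'' guarantee, and pinning down the value of $m$ for each $n$ and each $2$-connected $H$ uniformly; this is precisely the content relegated to Appendix~\ref{app:b}, and the delicate point is ensuring the forcing is robust (the $m$ red edges at $v$ are unavoidable in \emph{every} good colouring, not merely achievable in one) while simultaneously keeping $F$ itself $(K_{1,n}, H)$-good so that the forward extension does not accidentally create a red $K_{1,n}$ inside $F$ or a blue $H$. The bookkeeping that no red $K_{1,n}$ is created at interior vertices of the copies of $F$, and that the red degrees at the shared vertices add correctly ($m$ from $F$ plus at most $n-m-1$ from $G$), is routine once $F$ is in hand. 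The $2$-connectivity argument preventing rogue copies of $H$ is identical to the one used repeatedly above and can be cited rather than re-derived.
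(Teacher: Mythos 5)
Your reduction skeleton is exactly the paper's (its ``case 3''): attach to every vertex of $G$ a copy of a gadget that forces $m$ red edges at the identified vertex, and pass from $(K_{1,n-m},H)$ to $(K_{1,n},H)$. The bookkeeping for red degrees and the $2$-connectivity argument against rogue copies of $H$ are fine. But there are two genuine gaps. First, you take the existence of the gadget $F$ ``as given by the appendix construction''---yet constructing $F$ \emph{is} the content of this lemma's proof; there is no separate appendix result to cite. The paper builds it from a minimally bad graph $A$ with $A \ra (K_{1,n},H)$ whose every edge-deleted subgraph is good: setting $A' = A - (u,w)$, every good coloring of $A'$ must put at least $n-1$ red edges at $u$ or at $w$, and the proof then case-analyzes $r_{\mathcal{C}}(u)$, the minimum over good colorings of the red degree of $u$. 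Only when $r_{\mathcal{C}}(u) \in \{1,\dots,n-2\}$ does $A'$ itself serve as your $F$; without this construction your proof has no starting point.

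Second, your arithmetic ``since $1 \leq m < n$ we have $2 \leq n-m \leq n-1$'' is off by one: $m = n-1$ gives $n-m = 1$, and $(K_{1,1},H)$-Arrowing is neither covered by the hypothesis nor hard (it is trivially in P), so the reduction collapses there. This boundary case genuinely occurs ($r_{\mathcal{C}}(u) = n-1$, i.e., $A'$ is an enforcer), as does $r_{\mathcal{C}}(u)=0$ (no red edge forced at $u$ at all, so no valid $m \geq 1$ is available from $A'$ directly). The paper handles both by a different gadget: from enforcers it builds a \emph{leaf sender}---a copy of $H$ with enforcers appended to all but two adjacent vertices, forcing the remaining edge red---which pins exactly one red edge to the attachment vertex and hence reduces from $(K_{1,n-1},H)$-Arrowing, safely inside the hypothesis range since $n \geq 3$. (For $r_{\mathcal{C}}(u)=0$ one first chains $2|V(H)|$ copies of $A'$ in a cycle, as in the $(P_3,H)$-enforcer construction, to manufacture the enforcer.) Your proposal needs both the explicit construction of $F$ and this case split to be a proof.
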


Also in Appendix~\ref{app:b}, we reduce $(2,2)$-3SAT to $(K_{1,3}, K_3)$-Nonarrowing and show
how that result can be extended to $(K_{1,n}, K_3)$-Arrowing for $n \geq 4$,
giving us the following result:

\begin{theorem}
    For all $2$-connected $H$ and $n \geq 2$,
    $(K_{1,n}, H)$-Arrowing is coNP-complete with the exception of $(K_{1,2}, K_3)$-Arrowing, which is in P.
\end{theorem}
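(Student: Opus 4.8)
The plan is to first observe that membership in $\conp$ is immediate for every fixed $H$ and every $n$, as noted in the introduction: a $(K_{1,n}, H)$-good coloring of the input graph is a polynomial-time verifiable certificate since $K_{1,n}$ and $H$ are fixed. It therefore remains to establish hardness in all but the single exceptional case (and tractability there), and the natural approach is to split on whether $H = K_3$.

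For $H \neq K_3$, I would proceed by strong induction on $n$. The base case $n = 2$ is exactly $(P_3, H)$-Arrowing, which is $\conp$-complete by Theorem~\ref{thm:main} precisely because $H \neq K_3$. For the inductive step, assume $(K_{1,k}, H)$-Arrowing is $\conp$-hard for all $2 \leq k < n$; then Lemma~\ref{lem:k1nh-extend} immediately yields $\conp$-hardness of $(K_{1,n}, H)$-Arrowing. Since the hypothesis of Lemma~\ref{lem:k1nh-extend} is met at every stage (the base case seeds $k = 2$, and each completed stage supplies the next value), this covers all $n \geq 2$ simultaneously, giving $\conp$-completeness throughout the $H \neq K_3$ branch.

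For $H = K_3$ the situation is more delicate, and this is where the main obstacle lies. The case $n = 2$ is $(K_{1,2}, K_3) = (P_3, K_3)$-Arrowing, which is in $\p$ by Theorem~\ref{thm:p3k3-p}; this is exactly the stated exception. Because this base case is tractable rather than hard, the inductive engine of Lemma~\ref{lem:k1nh-extend} cannot be bootstrapped from $k = 2$, so the chain of reductions must be re-seeded at a larger value. To do this I would invoke the direct reduction from $(2,2)$-3SAT to $(K_{1,3}, K_3)$-Nonarrowing (Appendix~\ref{app:b}), establishing $\conp$-completeness at $n = 3$. For $n \geq 4$, I would then apply the star-reduction idea of Section~\ref{sec:stars}: given an instance $G$ of $(K_{1,3}, K_3)$-Nonarrowing, append to each vertex of $G$ a suitable $(K_{1,n}, K_3)$-good gadget whose signal vertex is always the center of exactly a $K_{1,n-3}$, thereby attaching $n-3$ forced-red edges at each vertex. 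A red $K_{1,3}$ in a coloring of $G$ then lifts to a red $K_{1,n}$ in the augmented graph, and conversely, yielding a reduction from the $n = 3$ case to every $n \geq 4$. The $2$-connectivity of $K_3$ guarantees that appending these gadgets creates no rogue blue triangles, so correctness follows from the gadget's forcing behavior alone.

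The crux of the argument, and the only genuinely new work beyond assembling the existing results, is the $H = K_3$ branch: one must both supply a standalone SAT reduction at $n = 3$ (since the $n = 2$ base is in $\p$ and cannot prime the induction) and construct, for each $n \geq 4$, a $(K_{1,n}, K_3)$-good gadget that reliably forces $n-3$ red edges at a single vertex. The $H \neq K_3$ branch, by contrast, is a clean strong induction reducing entirely to Theorem~\ref{thm:main} and Lemma~\ref{lem:k1nh-extend}, and combining the two branches with the $\conp$ upper bound gives the stated classification.
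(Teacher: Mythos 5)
Your $H \neq K_3$ branch is exactly the paper's argument: the $n=2$ base case is Theorem~\ref{thm:main} (since $P_3 = K_{1,2}$), and Lemma~\ref{lem:k1nh-extend} drives a strong induction on $n$. Your identification of the $H = K_3$ obstruction is also right, and the re-seeding at $n=3$ via a direct $(2,2)$-3SAT reduction matches Appendix~\ref{app:b}.

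The gap is in your lift from $n=3$ to $n \geq 4$ for $H = K_3$. You posit, for each $n \geq 4$, a $(K_{1,n}, K_3)$-good gadget whose signal vertex is forced to be the center of \emph{exactly} a $K_{1,n-3}$, and you defer its construction as ``the only genuinely new work.'' But that is precisely the hard part, and the machinery available does not deliver it: the minimally-bad-graph argument behind Lemma~\ref{lem:k1nh-extend} only produces a gadget forcing \emph{some} number $m$ of red edges at a vertex, with $1 \leq m < n$, and you do not get to choose $m$. The paper's actual proof accepts whatever $m$ arises and reduces from $(K_{1,n-m}, K_3)$-Arrowing; the single failure mode is $m = n-2$, where the reduction would be from $(K_{1,2}, K_3)$-Arrowing, which is in P and hence useless. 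The paper repairs exactly this case with an explicit construction (Figure~\ref{fig:k13k3-spesh}) that combines several copies of the $r_{\mathcal{C}}(u) = n-2$ graph into a genuine $(K_{1,n}, K_3)$-enforcer, from which a leaf sender follows and the reduction proceeds from $(K_{1,n-1}, K_3)$-Arrowing. So the correct inductive structure for $H = K_3$ is ``reduce from $(K_{1,k}, K_3)$ for some $3 \leq k < n$,'' not ``reduce from $n=3$ directly via an $(n-3)$-fold forced red star.'' Your plan could be salvaged by first establishing that $(K_{1,n}, K_3)$-leaf senders exist (then stacking $n-3$ of them), but proving that existence requires the same enforcer construction you have skipped, so as written the argument is incomplete at its crux.
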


Finally, recall that $(P_3, H)$-Nonarrowing is equivalent to $H$-free Matching Removal. We can assert a similar equivalence between $H$-free $b$-Matching Removal and $(K_{1,b+1}, H)$-Nonarrowing, giving us the following corollary:

\begin{corollary}
    For all 2-connected $H$, $H$-free $b$-Matching Removal is NP-complete for all $b \geq 1$, except the case where $b=1$ and $H = K_3$, which is in P.
\end{corollary}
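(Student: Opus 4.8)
The plan is to reduce the corollary to the theorem on $(K_{1,n}, H)$-Arrowing proved immediately above, via the equivalence between $H$-free $b$-Matching Removal and $(K_{1,b+1}, H)$-Nonarrowing flagged in the text. First I would make this equivalence precise. A red/blue coloring of $G$ is $(K_{1,b+1}, H)$-good exactly when no vertex is incident to $b+1$ red edges and the blue subgraph is $H$-free. The first condition says precisely that the red edge set $M$ induces maximum degree at most $b$ in $(V(G), M)$, i.e.\ $M$ is a $b$-matching; the second says $G' = (V(G), E(G) - M)$ is $H$-free. Hence $G$ admits a $(K_{1,b+1}, H)$-good coloring if and only if there is a $b$-matching $M$ whose removal makes $G$ become $H$-free, so the YES-instances of the two problems coincide. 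For $b=1$ this specializes to the $(P_3, H)$-Nonarrowing equivalence noted earlier, since $P_3 = K_{1,2}$.

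Given the equivalence, NP-membership is immediate: a witnessing $b$-matching $M$ is a polynomial-size certificate, and one can verify in polynomial time both that $M$ has maximum degree at most $b$ and, since $H$ is fixed, that $G' = (V(G), E(G) - M)$ contains no copy of $H$ by brute force over all $|V(H)|$-subsets of $V(G)$. For hardness I would simply transport the status of $(K_{1,b+1}, H)$-Arrowing: the preceding theorem gives that $(K_{1,n}, H)$-Arrowing is coNP-complete for every $2$-connected $H$ and every $n \geq 2$, save for $(K_{1,2}, K_3)$-Arrowing. Setting $n = b+1$ (so $b \geq 1$ corresponds to $n \geq 2$), its complement $(K_{1,b+1}, H)$-Nonarrowing is NP-complete under the same range and with the same single exception. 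Since this complement is, instance-for-instance with identical answers, the problem $H$-free $b$-Matching Removal, the latter inherits NP-completeness for all $2$-connected $H$ and all $b \geq 1$ outside the exceptional case.

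Finally I would dispose of the exception. The excluded case $n = b+1 = 2$, $H = K_3$ corresponds to $b = 1$, $H = K_3$, where $(K_{1,2}, K_3)$-Arrowing $=(P_3, K_3)$-Arrowing lies in P by Theorem~\ref{thm:p3k3-p}; by the equivalence its complement, $K_3$-free (ordinary) Matching Removal, is therefore also in P, matching the stated exception. There is essentially no hard combinatorial step left: all the substantive work---constructing enforcers, signal extenders, and clause/variable gadgets for $2$-connected $H$, together with the star-reduction of Lemma~\ref{lem:k1nh-extend}---has already been carried out to establish the $(K_{1,n}, H)$-Arrowing theorem. The only points requiring care are the clean identification of the $K_{1,b+1}$-free condition with the $b$-matching (maximum-degree-at-most-$b$) condition, and confirming that the index shift $n = b+1$ maps the unique P-case of the arrowing theorem onto the unique P-case of the matching-removal statement.
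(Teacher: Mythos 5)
Your proposal is correct and follows essentially the same route as the paper, which derives the corollary directly from the equivalence between $H$-free $b$-Matching Removal and $(K_{1,b+1},H)$-Nonarrowing together with the preceding theorem on $(K_{1,n},H)$-Arrowing. You simply spell out the degree-at-most-$b$ characterization, the NP-membership check, and the index shift $n=b+1$ in more detail than the paper does.
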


\section{Conclusion and future work}
\label{sec:conclude}

This paper provided a complete categorization for the complexity of $(P_3, H)$-Arrowing when $H$ is $2$-connected. We provided a polynomial-time algorithm when $H=K_3$, and coNP-hardness proofs for all other cases. Our gadgets utilized a novel graph invariant, minimum edge pair linkage, to avoid unwanted copies of $H$. We showed that our hardness results can be extended to $(P_3, TK_n)$- and $(K_{1,n}, H)$-Arrowing using easy-to-understand graph transformations. 

Our ultimate goal is to categorize the complexity of all $(F,H)$-Arrowing problems.
Our first objective is to categorize the complexity of $(P_3, H)$-Arrowing for all $H$, and to find more graph transformations to extend hardness proofs between different arrowing problems.

\bibliography{mybib}

\begin{thebibliography}{10}

\bibitem{berman200322sat}
P.~Berman, M.~Karpiński, and A.~Scott.
\newblock {Approximation Hardness of Short Symmetric Instances of MAX-3SAT}.
\newblock {\em ECCC}, 2003.

\bibitem{bikov2018}
A.~Bikov.
\newblock {\em {Computation and Bounding of {F}olkman Numbers}}.
\newblock PhD thesis, Sofia University ``St. Kliment Ohridski'', 06 2018.

\bibitem{Bu3}
S.A. Burr.
\newblock {On the Computational Complexity of Ramsey-Type Problems}.
\newblock {\em Mathematics of Ramsey Theory, Algorithms and Combinatorics}, 5:46--52, 1990.

\bibitem{burr1976graphs}
S.A. Burr, P.~Erd\H{o}s, and L.~Lov{\'a}sz.
\newblock {On graphs of {R}amsey type}.
\newblock {\em Ars Combinatoria}, 1(1):167--190, 1976.

\bibitem{edmonds1965paths}
J.~Edmonds.
\newblock {Paths, Trees, and Flowers}.
\newblock {\em Canadian Journal of Mathematics}, 17:449--467, 1965.

\bibitem{hassan2023}
Z.R. Hassan, E.~Hemaspaandra, and S.~Radziszowski.
\newblock {The Complexity of $(P_k, P_\ell)$-Arrowing}.
\newblock In {\em {FCT} 2023}, volume 14292, pages 248--261, 2023.

\bibitem{lima2017decycling}
C.V.G.C. Lima, D.~Rautenbach, U.S. Souza, and J.L. Szwarcfiter.
\newblock {Decycling with a Matching}.
\newblock {\em Information Processing Letters}, 124:26--29, 2017.

\bibitem{lima2018bipartizing}
C.V.G.C. Lima, D.~Rautenbach, U.S. Souza, and J.L. Szwarcfiter.
\newblock {Bipartizing with a Matching}.
\newblock In {\em COCOA 2018}, pages 198--213, 2018.

\bibitem{DBLP:journals/anor/LimaRSS22}
C.V.G.C. Lima, D.~Rautenbach, U.S. Souza, and J.L. Szwarcfiter.
\newblock {On the Computational Complexity of the Bipartizing Matching Problem}.
\newblock {\em Ann. Oper. Res.}, 316(2):1235--1256, 2022.

\bibitem{ds1}
S.~Radziszowski.
\newblock {Small Ramsey Numbers}.
\newblock {\em Electronic Journal of Combinatorics}, DS1:1--116, January 2021.
\newblock URL: \url{https://www.combinatorics.org/}.

\bibitem{rosta}
V.~Rosta.
\newblock {Ramsey Theory Applications}.
\newblock {\em Electronic Journal of Combinatorics}, DS13:1--43, December 2004.
\newblock URL: \url{https://www.combinatorics.org/}.

\bibitem{rut:c:graph-coloring}
V.~Rutenburg.
\newblock {Complexity of Generalized Graph Coloring}.
\newblock In {\em {MFCS 1986}}, volume 233 of {\em Lecture Notes in Computer Science}, pages 573--581. Springer, 1986.

\bibitem{Scha}
M.~Schaefer.
\newblock {Graph Ramsey Theory and the Polynomial Hierarchy}.
\newblock {\em Journal of Computer and System Sciences}, 62:290--322, 2001.

\end{thebibliography}

\newpage
\begin{appendix}

\section{Proof of Claim~\ref{claim:lem:mlep2-combine} and missing gadgets}
\label{app:a}
\meplclaim*
\begin{claimproof}
\begin{enumerate}
\item This follows from our definition of $Z$.
\item If at most one vertex in $\{u, v\}$ were in $Z$, deleting it would disconnect said copy of $H$, contradicting the fact that $H$ is $2$-connected.
\item Suppose that both $E_{Z_X}$ and $E_{Z_Y}$ are nonempty. Let $f_1 \in E_{Z_X}$ and $f_2 \in E_{Z_Y}$. We have $\epl_{A_{H,e}}(f_1, f_2) = 0$
since, by construction,
$E_{A_{H,e}}(X - Y, Y - X) = \emptyset$.
Since both of these edges belong to $Z$, which is isomorphic to $H$, we also have $\epl_H(f_1, f_2) = 0$, which contradicts our assumption that $\mepl(H) \geq 2$.
\item Note that $k$-connected graphs must have minimum degree at least $k$ as a vertex with fewer neighbors could be disconnected from the rest of the graph with fewer than $k$ vertex deletions. So, we have $d_H(w) \geq 2$ for each $w \in V(H)$.
W.l.o.g., assume that $E_{Z_X} = \emptyset$. Then, each vertex in $w \in Z_X$ can only be connected to $u$ and $v$. Thus, we have $d_H(w) = d_Z(w) \leq 2$, and consequently $d_H(w) = 2$. \claimqedhere
\end{enumerate}
\end{claimproof}

\noindent \textbf{Missing gadgets for \texorpdfstring{$\mathbf{(P_3, H)}$-Nonarrowing}{(P3, H)-Nonarrowing}.}
In Figure~\ref{fig:p3j4} we show the variable gadget for $(P_3, J_4)$-Nonarrowing.
In Figure~\ref{fig:p3c4} we show the clause gadget for $(P_3, C_4)$-Nonarrowing.

\begin{figure}[t]
    \centering
    \includegraphics[width=0.8\textwidth]{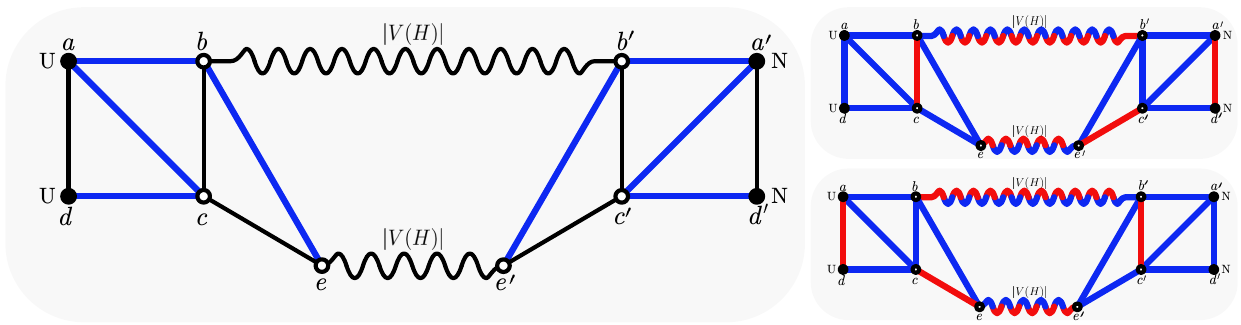}
    \caption{The variable gadget for $(P_3, J_4)$-Nonarrowing is shown on the left. The colorings on the right show that when vertices marked $\mathbf{U}$ (resp., $\mathbf{N}$) are free, those marked $\mathbf{N}$ (resp., $\mathbf{U}$) are nonfree.
    }
    \label{fig:p3j4}
\end{figure}

\begin{figure}[t]
    \centering
    \includegraphics[width=0.65\textwidth]{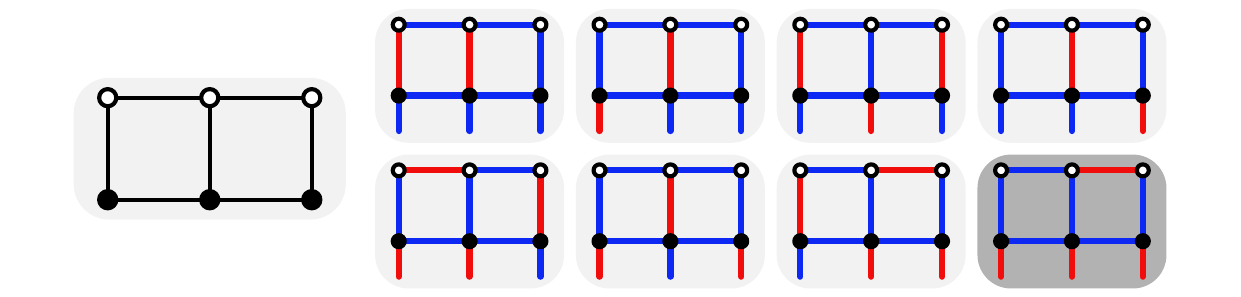}
    \caption{The clause gadget for $(P_3, C_4)$-Nonarrowing is shown on the left. Possible inputs for the gadget are shown on the right. A good coloring is possible unless the input is three red edges.
    }
    \label{fig:p3c4}
\end{figure}

\section{Extending to stars}
\label{app:b}

We first define the following special graph which will be used in our extension proof.

\begin{definition}
A graph $G$ is called an
    $(F, H)$-leaf sender
    with leaf-signal edge $(u,v)$ 
    if it is $(F,H)$-good, 
    $(u,v)$ is red in all good colorings, 
    and there exists a good coloring where 
    $(u,v)$ is not adjacent to any other red edge.
\end{definition}

When the context is clear,
we will use the shorthand \textbf{append a leaf sender to $u \in V(G)$} to mean we will 
add an $(F, H)$-leaf sender to $G$ and identify a vertex of its leaf-signal edge with $u$. 
This graph essentially simulates ``appending a red edge'' as described in Section~\ref{sec:stars}.

\subsection{Proof of Lemma~\ref{lem:k1nh-extend}}
\starhlemma*
\begin{proof}

Suppose we are trying to prove the hardness of $(K_{1,n},H)$-Arrowing for $2$-connected $H$ and $n \geq 3$.
Let $A$ be a ``minimally bad'' graph such that $A \ra (K_{1,n}, H)$, but removing any edge $e$ from $A$ gives a $(K_{1,n},H)$-good graph. 
Let $e = (u, w)$ and $A' = A - e$. 
Let $\mathcal{C}$ be the set of all good colorings of $A'$.
For a coloring $c \in \mathcal{C}$ and vertex $v$, let $r_c(v)$ be the number of red edges that $v$ is adjacent to in $c$. 
Let $r_{\mathcal{C}}(v) = \min_{c \in \mathcal{C}} r_c(v)$.
We consider different cases for $r_{\mathcal{C}}(u)$. Since $c$ is a good coloring, we know that 
$r_{\mathcal{C}}(u) \leq n-1$.
\begin{enumerate}
    \item $r_{\mathcal{C}}(u) = n-1$. 
    In this case, it is easy to see that $A'$ is a $(K_{1,n}, H)$-enforcer with signal vertex $u$.
    Let $(p,q) \in V(H)$. 
    Construct a graph $B$ like so.
    Take a copy of $H$ and append an enforcer to each vertex of $H$ except $p$ and $q$. 
    It is easy to see that $(p,q)$ must be red in all good colorings, i.e., $B$ is a $(K_{1,n}, H)$-leaf sender with leaf-signal edge $(p,q)$. 
    For any graph $G$, we 
    append a leaf-sender to each of $G$'s vertices to obtain a graph $G'$ such that $G' \ra (K_{1,n}, H) \iff G \ra (K_{1,{n-1}}, H)$, as discussed in Section~\ref{sec:stars}.
    
    \item $r_{\mathcal{C}}(u) = 0$. 
    Let $m = |V(H)|$.
    In this case, we can construct a $(K_{1,n}, H)$-enforcer, $B$, by combining $2m$ copies of $A'$ as we did in Lemma~\ref{lem:p3enforce}. 
    Note that in any good coloring $c$ of $A'$, we have that $r_c(u) \geq n-1$ or 
    $r_c(w) \geq n-1$; if not, such a coloring and a red $(u,w)$ gives a good coloring for $A$, contradicting the fact that $A \ra(K_{1,n}, H)$.
Make $2m$ copies of $A'$, where $u_i$ (resp., $w_i)$ refers to the vertex $u$ (resp., $w$) in the  $i^{\text{th}}$ copy of $A'$, referred to as $A'_i$.
Now, identify each $w_i$ with $u_{i+1}$ for $i \in \{1,2,\ldots,2m-1\}$, and identify $w_{2m}$ with $u_1$. 
Observe that when $w_1$ is adjacent to $n-1$ red edges in $A'_1$, then $u_2$ cannot be adjacent to any red edge in $A'_2$, causing $w_2$ to be adjacent to $n-1$ red edges in $A'_2$, and so on. 
Since every $u_i$ and $w_i$ is adjacent to $n-1$ red edges, any of them can be our signal vertex $v$.
We can now proceed as we did in the previous case to reduce from $(K_{1,n-1}, H)$-Arrowing. 
    
    \item $r_{\mathcal{C}}(u) \in \{1,2,\ldots, n-2\}$. 
    For any graph $G$, we can attach a copy of $A'$ to each of $G$'s 
    vertices---identifying $u \in V(A')$ with each vertex---to obtain a graph $G'$ such that $G' \ra (K_{1,n}, H) \iff G \ra (K_{1,{n-r_{\mathcal{C}}(u)}}, H)$, as discussed in Section~\ref{sec:stars}, thereby providing a reduction from $(K_{1,{n-r_{\mathcal{C}}(u)}}, H)$-Arrowing. \qedhere
\end{enumerate}
\end{proof}

\begin{figure}[t]
    \centering
    \includegraphics[width=0.7\textwidth]{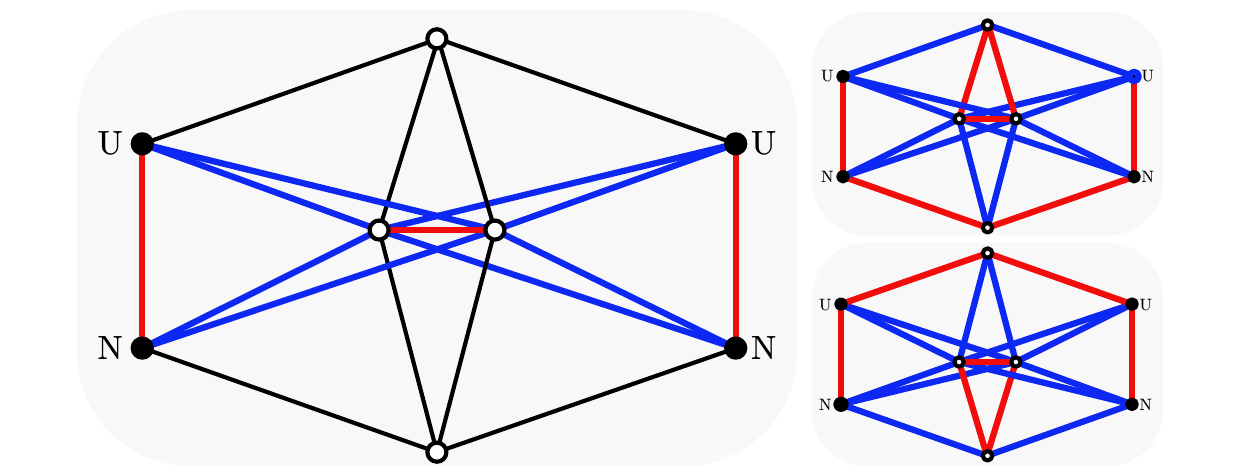}
    \caption{The variable gadget for $(K_{1,3}, K_3)$-Nonarrowing is shown on the left. Edges with the same color in all good colorings have been pre-colored. 
    Both good colorings are shown on the right.
    }
    \label{fig:k13k3-var}
\end{figure}

\subsection{Hardness of \texorpdfstring{$\mathbf{(K_{1,n}, K_3)}$-Nonarrowing}{(K-1-n, K3)-Nonarrowing} for \texorpdfstring{$n \geq 3$}{n >= 3}
}

To show that $(K_{1,3}, K_3)$-Arrowing is coNP-complete, we provide gadgets as we did for $(P_3,H)$-Arrowing.
We provide gadgets in Figures~\ref{fig:k13k3-var} and~\ref{fig:k13k3-clause} to show that 
$(2,2)$-3SAT can be reduced to 
$(K_{1,3}, K_3)$-Nonarowing. 
Note that the output vertices are either attached to a single red edge or two red edges. When they are attached to a single red edge, they behave like true output signals. When adjacent to two red edges, they behave like a false output signal. 
The clause gadget behaves like an OR gate, in that it has no good coloring when the three input vertices all have false inputs.

\begin{figure}[t]
    \centering
    \includegraphics[width=0.7\textwidth]{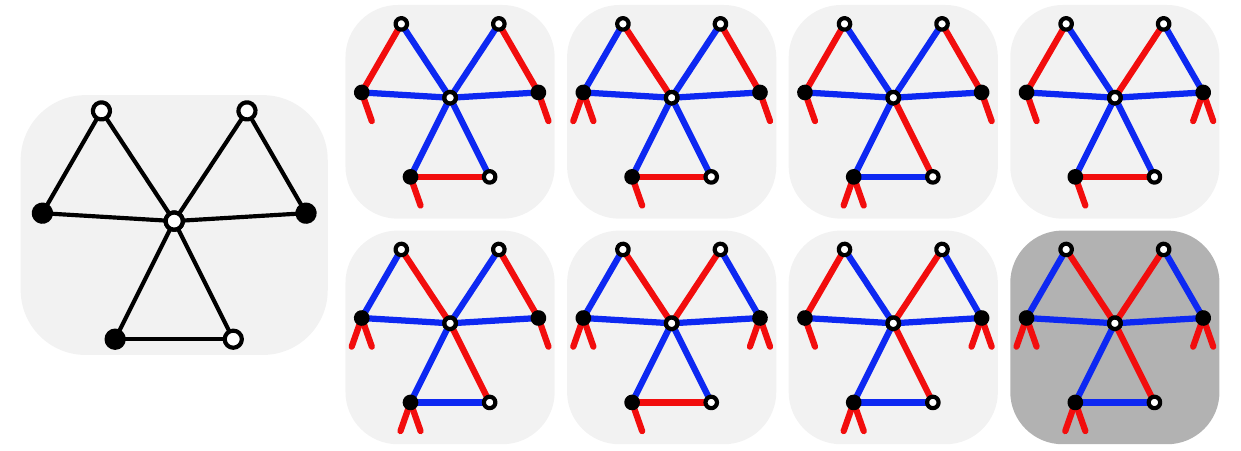}
    \caption{The clause gadget for $(K_{1,3}, K_3)$-Nonarrowing is shown on top. The eight combinations of inputs that can be given to the gadget are shown on the bottom. Observe that a $(K_{1,3}, K_3)$-good coloring is always possible unless the input is three red $K_{1,2}$'s.
    }
    \label{fig:k13k3-clause}
\end{figure}
\begin{figure}[t]
    \centering
    \includegraphics[width=0.7\textwidth]{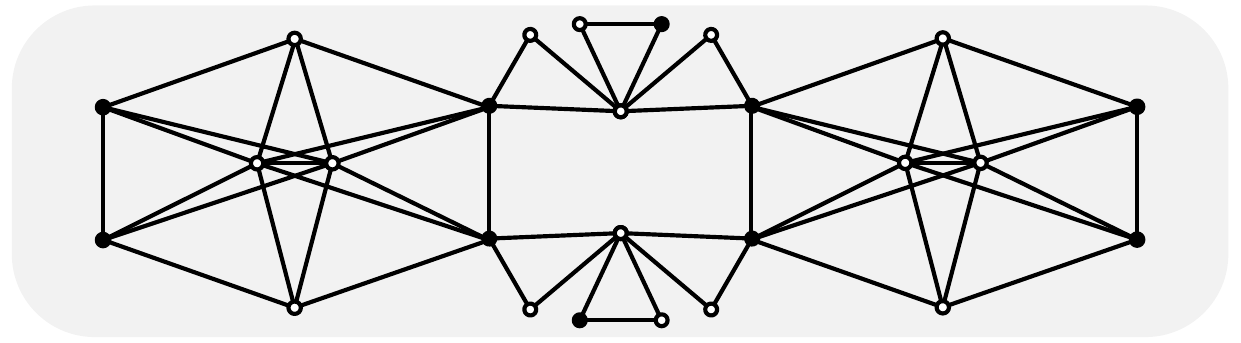}
    \caption{The smallest cycle made when joining $(K_{1,3},K_3)$ variable and clause gadgets is a $C_6$.
    }
    \label{fig:k1k3-join}
\end{figure}
\begin{figure}
    \centering
    \includegraphics[width=0.8\textwidth]{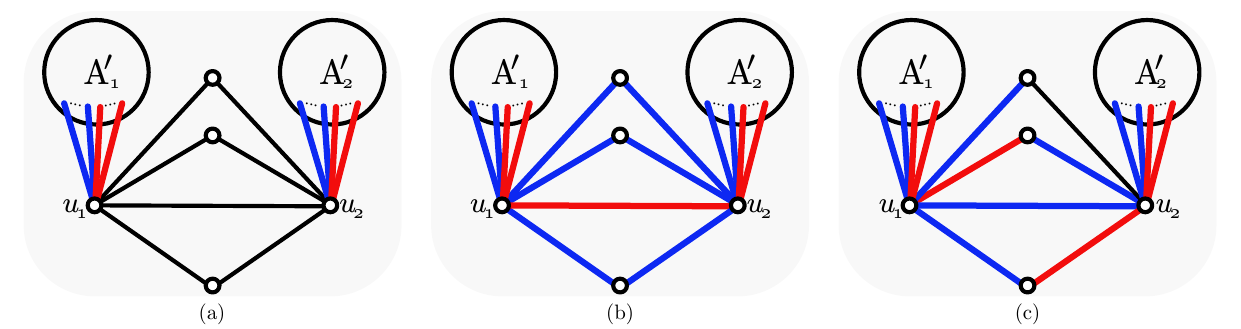}
    \caption{$A'_1$ and $A'_2$ are graphs with $r_{\mathcal{C}}(u_i) = n-2$. We show how to construct a new graph using these in \textbf{(a)}. In \textbf{(b)}, we show a good coloring where $u_i$'s are now adjacent to $n-1$ red edges. Finally, in \textbf{(c)}, we observe that the coloring in \textbf{(b)} is the only good coloring since at most one edge from outside $A'_i$ that is adjacent to $u_i$ can be red. 
    }
    \label{fig:k13k3-spesh}
\end{figure}

Recall that in the hardness proofs for $(P_3, H)$-Arrowing, we also had to show that no new $H$ is constructed while combining gadget graphs to construct $G_{\phi}$. It is easy to see that no new $K_3$ is constructed when our gadgets are combined; since each clause has unique literals, a cycle formed while constructing $G_\phi$
would have to go through at least two clause gadgets and at least two variable gadgets, but this cycle has more than three vertices (see Figure~\ref{fig:k1k3-join}).

To show the hardness of $(K_{1,n}, K_3)$-Arrowing for $n \geq 4$, we can proceed exactly as we did in Lemma~\ref{lem:k1nh-extend}. 
The only case where the proof fails is when 
$r_{\mathcal{C}}(u) = n-2$, because now the proof says we have to reduce $(K_{1,2}, K_3)$-Arrowing to $(K_{1,n},K_3)$-Arrowing, which is unhelpful since $(K_{1,2},K_3)$-Arrowing is in P. In Figure~\ref{fig:k13k3-spesh} we show how vertices attached to $n-2$ red edges can be combined to make a $(K_{1,n}, K_3)$-enforcer. Using the enforcer, we can create a $(K_{1,n}, K_3)$-leaf sender and reduce from $(K_{1,n-1}, K_3)$-Arrowing.

\end{appendix}

\end{document}